\newtheorem{proposition}{Proposition}
\newtheorem{corollary}{Corollary}
\newtheorem*{lemma*}{Lemma}
\title{A Unified Effective Capacity Performance Analysis of $L_p$-norm Diversity Reception over Arbitrary and Correlated Generalized Fading Channels*}
\author{K.~Denia~Kanellopoulou, ~\IEEEmembership{Student Member,~IEEE}, Kostas~P.~Peppas,~\IEEEmembership{Senior Member,~IEEE} and P.~Takis~Mathiopoulos,~\IEEEmembership{Senior Member,~IEEE}

\thanks{\hl{*This manuscript was submitted on Sept. 30, 2017, for possible publication in the IEEE TCOM as TCOM-TPS-17-1021. On Dec. 17, 2017, it was
rejected with a possibility of resubmission by the reviewing editor Prof. Sami Muhaidat.  Motivated by the reviewers' comments, we have 
significantly extended the research reported in TCOM-TPS-17-1021 and on March 4, 2019, we have resubmitted the revised manuscript for possible 
publication as TCOM-TPS-19-0147 (on arXiv it was uploaded on Feb. 2, 2019 as arXiv:1703.06941v2 [cs.IT]). Since the originally submitted manuscript
(i.e. TCOM-TPS-17-1021) was uploaded to arXiv on Oct. 10, 2019, i.e. long after its original submission (on Sept. 30, 2017). It must be underlined that, 
although it is referenced as arXiv:1703.06941v3[cs.IT], it reports the research which we have carried out long before the submission of its revised version 
TCOM-TPS-19-0147 (March 4, 2019) referenced as arXiv:1703.06941v2[cs.IT] (Feb. 2, 2019)}}

\thanks{K.~Denia~Kanellopoulou and P.~Takis~Mathiopoulos are with the Department of Informatics and Telecommunications, National and Kapodistrian University of Athens, 15784 Ilisia, Athens, Greece (e-mail: mathio@di.uoa.gr).}
\thanks{Kostas~P.~Peppas is with the Department of Informatics and Telecommunications, University of Peloponnese, 22100 Tripoli, Greece. (e-mail: peppas@uop.gr)}
}
\begin{document}
\maketitle
\vspace{0.5cm}
\begin{abstract}
The effective capacity (EC) has been recently established as a rigorous alternative to the classical Shannon's
ergodic capacity since it accounts for the delay constraints imposed by future wireless
applications and their impact on the overall system performance.
This paper presents a novel moment generating function (MGF)-based framework for the unified EC performance analysis of a generic $L_p$-norm diversity combining scheme operating over arbitrary and correlated generalized
fading channels and a maximum delay constraint.
The $L_p$-norm diversity is a generic diversity structure which includes as special cases various well-known diversity schemes such as equal gain combining (EGC) and maximal ratio combining (MRC).
For MRC, the proposed methodology reduces to a previously published MGF-based approach for the evaluation of the EC, whereas, for EGC, analytical approach presented is novel and the associated performance evaluation results have not been published previously in the open technical literature.
Based on this methodology, novel analytical closed-form expressions for the EC performance of dual branch $L_p$-norm diversity receivers operating over Gamma shadowed generalized Nakagami-$m$ fading channels are deduced.
For diversity order greater than two, a novel analytical approach for the asymptotic EC performance analysis is also developed and evaluated, revealing how basic system parameters affect the overall system performance.
The overall mathematical formalism is validated with selected numerical and equivalent simulation performance evaluation results thus confirming the correctness of the proposed unified analytical methodology.
\end{abstract}
\begin{IEEEkeywords}
Delay constraints, diversity receivers, effective capacity, equal gain combining, fading channels, $L_p$-norm combining, maximal ratio combining, moment generating function.
\end{IEEEkeywords}
\newcounter{mytempeqncnt}
\section{Introduction}\label{Sec:Intro}

\IEEEPARstart{R}{eal}-time emerging applications such as voice over internet protocol
(IP), interactive and multimedia streaming, interactive gaming, mobile TV and computing
are largely delay-sensitive, which implies that
the data will expire if it is not successfully delivered within a certain time frame. As such, an alternative quality of service (QoS) metric that is able to capture the
end-to-end communication delay is required. Unfortunately, the conventional notion of Shannon or outage capacity cannot account for the delay aspect. The effective capacity (EC)
was introduced in \cite{J:Wu} as an alternative metric that quantifies the system performance under QoS limitation.


Recently, the concept of EC has attracted the attention of the wireless research community for the performance analysis of single-input single-output (SISO) and multiple-input single-output (MISO)
communication systems with transmit maximal ratio combining (MRC)\footnote{
{The SNR at the output of MISO receivers with transmit MRC is essentially the same as that obtained when MRC is employed
in single-input multiple-out (SIMO) systems (divided by a factor equal to the number of transmitting antennas) and, thus, their EC performances will be identical as well.
In fact, when full channel state information (CSI) is assumed at the transmitter, a MISO system can be viewed as
the dual of a SIMO system with MRC.}}, e.g. see \cite{J:Matthaiou, J:Zhang, C:Zhang, J:Zhong, J:Guo, J:You2, J:Li, J:Ji1,C:Ji2, J:You}.
The subject of evaluating the performance of such systems has been investigated in the past in several research works, assuming either independent \cite{J:Matthaiou, J:Zhang, C:Zhang} or correlated \cite{J:Zhong, J:Guo} fading channels. Specifically, in \cite{J:Matthaiou}, the EC of MISO systems assuming independent Nakagami-$m$, Rice and generalized-K channels was addressed. In \cite{J:Zhang} and \cite{C:Zhang}, the EC of MISO systems assuming independent $\kappa$-$\mu$ and $\eta$-$\mu$ fading channels, respectively, was investigated.
In \cite{J:Zhong}, the EC of multiple-antenna systems operating in fading channels with correlation and keyholes has been considered.
In \cite{J:Guo}, infinite series representations of the EC of MISO systems operating over correlated Rice and Nakagami-$m$ fading channels have been derived. The EC of SISO and MISO systems operating over Weibull and shadowed $\kappa-\mu$ fading channels has been addressed in \cite{J:You2}, and \cite{J:Li}, respectively.
In \cite{J:Ji1,C:Ji2, J:You}, various analytical approaches for the computation of the EC of MISO systems operating over arbitrary generalized fading channels have been presented. These approaches utilize the moment generating function (MGF) to deduce single-integral expressions for the EC. Specifically, \cite{J:Ji1} and \cite{C:Ji2} have presented an MGF-based approach for the EC evaluation of MISO systems by utilizing a Laplace transform of the MGF of the receiver's output signal-to-noise ratio (SNR). More recently, in \cite{J:You}, a unified analytical framework for the evaluation of the EC of MISO systems by employing the so-called H-transform techniques has been proposed. However, as it will be explained later on in this paper, these approaches deal with MRC receivers and their methodology is not applicable to other diversity schemes.


Another popular diversity scheme, namely equal gain combining (EGC), is regarded as a practical alternative to MRC since its performace is comparable to that of MRC but exhibits lower implementation complexity \cite[Chapter 9, p. 278]{B:Alouini}.
Nevertheless, to the best of our knowledge the EC of EGC receivers operating over arbitrary and/or correlated generalized fading channels has not been considered in the open technical literature yet.
This is mainly because of the inherent difficulty in obtaining simple, mathematically tractable expressions for the probability density function (PDF) of the sum of fading envelopes \cite[Chapter 9, p. 282]{B:Alouini}. Moreover, various performance evaluation approaches presented in the past, i.e. \cite{J:Matthaiou, J:Zhang, C:Zhang, J:Guo, J:Li, J:You2, J:Ji1,C:Ji2, J:You}, are valid only for MRC diversity receivers and thus can't be applied
to the evaluation of the EC performance of other diversity receivers, such as the EGC.

Motivated by the above, in this paper
we follow a more general approach by considering a signal combining structure whose output is
the $L_p$-norm of the fading envelopes of each of the $L$ diversity branches and analyzing its EC performance in a unified way, by developing a new MGF-based methodology.
As it will be explained in Section~\ref{Sec:System}, where the system model will be presented, this is a generic diversity structure which includes several well known diversity schemes, including EGC ($L_1$-norm diversity) and MRC ($L_2$-norm diversity), as special cases.
Note that, although for both of these well-known diversity schemes various EC performance evaluation results will be obtained, our EC study for the EGC receiver is, to the best of our knowledge, for the first time published in the open technical literature. Furthermore, and apart from introducing introducing this generic methodology, the main novel contributions of the paper can be
summarized as follows.


\begin{itemize}
  \item A novel, single integral expression for the unified EC performance analysis of $L_p$-norm diversity receivers is deduced. This generic expression can be used to obtain the EC performance for a variety of diversity schemes operating over arbitrary and correlated generalized fading by means of standard numerical integration techniques, provided that the MGF of the $p$-th power of the fading envelope, where $p$ is an arbitrary positive real number, can be obtained. 
    The proposed analysis includes other previously published MGF-based approaches presented in \cite{J:Ji1,C:Ji2, J:You}, where EC performance evaluation results for the MRC receiver over arbitrary and correlated generalized fading channels have been obtained, as special cases;
  \item Derivation of a novel, closed-form, analytical expression for the EC of dual-branch $L_p$-norm diversity receivers, operating over a so-called Gamma-shadowed generalized Nakagami-$m$ (GSNM) fading environment. The motivation behind the selection of the GSNM fading distribution is its versatility, as it includes as special or limiting cases the Nakagami-$m$, the generalized gamma, the generalized-K and the log-normal distribution \cite{J:YilmazEGC}.
      Moreover, it can accurately model wireless propagation in high frequency (60 GHz and above) communication systems \cite{J:YilmazEGC}.
   For the special cases of generalized-K and Nakagami-$m$ fading, closed-form analytical expressions for the EC of dual branch EGC diversity receivers are also presented;
    \item In order to obtain further insights into the effect of system parameters, such as delay constraints, fading parameters and number of antennas,
     a novel asymptotic analysis is proposed thus assessing the EC of MRC and EGC diversity receivers at low- and high-SNR regimes.
\end{itemize}
It is noted that the accuracy of the overall analytical methodology is substantiated with numerical results, accompanied with equivalent performance evaluation results obtained by means of Monte-Carlo simulations.

The rest of the paper is organized as follows. In Section~\ref{Sec:System}, the system and channel model are introduced. In Section \ref{Sec:Result}, an exact analytical expression for the EC using a unified MGF-based approach is developed. Analytical expressions for the EC of dual-branch $L_p$-norm diversity receivers, operating over GSGN fading channels can be found in Section \ref{Sec:Anal}, while Section \ref{Sec:Asymptotic} presents generic asymptotic expressions for EC in high- and low-SNR regimes.
Numerical results are provided in Section \ref{Sec:Results}, validating the proposed analysis. Finally, Section \ref{Sec:Conclusions} summarizes the paper and its main findings.
\emph{Notations}: ${\mathbb E}\langle\cdot\rangle$ denotes expectation, $f_X(\cdot)$ denotes the PDF of the random variable (RV) $X$,
$\mathcal{M}_X(\cdot)$ denotes the MGF of the random variable $X$, $\Delta(k, a) = \left\{\frac{a}{k},\frac{a+1}{k},\ldots, \frac{a+k-1}{k}\right\}$,
$\mathbf{I}$, $\det(\cdot)$ and $(\cdot)^{-1}$ denote the $L \times L$ identity matrix, matrix determinant, and matrix
inversion, respectively, $\parallel \cdot \parallel^2_F$ denotes the squared Frobenius norm,
$\mathbb{L}^{-1}\left\{F(s); s; t\right\}$ denotes the inverse Laplace transform of $F(s)$,
$\mathbb{N}$ is the set of positive integers, $\mathbb{R}^+$ is the set of positive reals,
$J_{a}\left(\cdot\right)$  is the Bessel function of the first kind and order $a$ \cite[Eq. (8.402)]{B:Gradshteyn_00}, $I_{a}\left(\cdot\right)$ is the modified Bessel function of the first kind and order $a$ \cite[eq. (8.431)]{B:Gradshteyn_00}, 
$\Gamma\left(\cdot\right)$ is the Gamma function \cite[Eq. (8.310/1)]{B:Gradshteyn_00},
$G \,\substack{ m , n\\ p , q} \left[\cdot\right]$ is the Meijer's G-function \cite[Eq. (9.301)]{B:Gradshteyn_00},
$G \, \substack{ m_1 , n_1 : m_2 , n_2 : m_3 , n_3 \\ p_1 , q_1 :p_2 , q_2 : p_3 , q_3} $ is the bivariate Meijer's G-function \cite{B:HFox},
$H \,\substack{ m , n\\ p , q} \left[\cdot\right]$ is the Fox's G-function, \cite{B:HFox} and
$H \, \substack{ m_1 , n_1 : m_2 , n_2 : m_3 , n_3 \\ p_1 , q_1 :p_2 , q_2 : p_3 , q_3} $ is the bivariate Fox's H-function \cite{B:HFox}.

.

\section{System and Channel Models}\label{Sec:System}

Consider an 
$L$-branch diversity receiver operating in the presence of arbitrary generalized fading and AWGN.
Let $\mathcal{R}_\ell$ be the fading envelope at the $\ell$-th branch, $\forall \ell \in \{1,2,\ldots,L\}$. The $L_p$ norm of
the random vector $\vec{{\mathbf{R}}} = \{\mathcal{R}_1, \mathcal{R}_2, \ldots , \mathcal{R}_L\}$
is defined as \cite[p. 64]{B:Rudin}
\begin{equation}\label{Eq:pnorm}
 \| \vec{{\mathbf{R}}}  \|_p \triangleq \left(\sum_{\ell = 1}^L \mathcal{R}_\ell^p \right)^{\frac{1}{p}}
\end{equation}
where $p$ is a positive real number\footnote{
In \cite{J:Nasri}, a different $L_p$-norm has been proposed for the purpose of evaluating the Euclidean distance between two points in the signal constellation diagram raised to the $p$-th power. Using this metric, asymptotic bit error rate results were presented for different diversity schemes operating in non-Gaussian and interference channels.
}. A generic analytical expression for the instantaneous SNR, $\gamma_{\rm end}$, at the output
of the $L_p$-norm diversity receiver
can be deduced as \cite[Eq. (2)]{J:YilmazEGC}
\begin{equation}\label{Eq:SNR}
\gamma_{\rm end} = \frac{E_s}{N_0\sqrt{L^{1-p+q}}} \left(\| \vec{{\mathbf{R}}}  \|_p\right) ^{pq}
\end{equation}
where $E_s/N_0$ is the SNR per symbol, with $E_s$ being the average symbol energy and $N_0$ is the single-sided power spectral density of the AWGN. Note that for $(p,q) = (1,2)$ the EGC
diversity scheme results, whereas for $(p,q) = (2,1)$ the MRC. 
As far as the queuing model is concerned, a simple first-input first-output (FIFO) buffer
with constant arrival rate (source data rate) at the
transmitter data link layer is considered. By considering ideal modulation and
coding at the source physical layer, the service rate of the buffer
will be equal to the instantaneous channel capacity. Therefore, using \cite[Eq. (4)]{J:Zhong} and assuming that the transmitter sends uncorrelated
circularly symmetric zero-mean complex Gaussian signals, the EC can be expressed as
\begin{align}\label{Eq:Defcapacity}
R(\theta) = -\frac{1}{\theta T B}\ln\left[\mathbb{E}\left\langle\left(1+\gamma_{\rm end}\right)^{-A}\right\rangle\right]
 \end{align}
where $A \triangleq \theta T B/\ln 2$ represents a metric of delay constraint, with $B$ denoting the bandwidth of the
system, $T$ the fading block length and $\theta$ the asymptotic decay-rate of the buffer occupancy.
It is noted that a small value for $\theta$ corresponds to a slow decaying rate thus having less stringent QoS requirements, while a larger one refers to faster decaying rates and thus more stringent QoS requirements.

By substituting \eqref{Eq:SNR} into \eqref{Eq:Defcapacity}, it becomes evident that the evaluation of the EC assuming arbitrarily distributed $\mathcal{R}_\ell$ with generalized correlation, involves the numerical evaluation of the following $L$-fold integral,
\begin{equation}
\mathcal{I} = \int_{\vec{\mathbf{{R}}}}\left[1+\frac{E_s}{N_0\sqrt{L^{1-p+q}}}\left(\sum_{\ell =0}^L \mathcal{R}_\ell^p\right)^q\right]^{-A}
f_{\vec{\mathbf{{R}}}}(\vec{\mathbf{R}})\mathrm{d}\vec{\mathbf{R}}
\end{equation}
where $f_{\vec{\mathbf{{R}}}}(\vec{\mathbf{R}})$ is the joint PDF of $\vec{{\mathbf{R}}} \triangleq \{\mathcal{R}_1, \mathcal{R}_2, \ldots , \mathcal{R}_L\}$. Clearly, this expression becomes prohibitively complex even for small values of $L$. In fact, even beyond only three branches, i.e. $L$ $>$ 3, such an approach becomes computationally intractable and numerical results may not even converge.
In order to solve this cumbersome statistical problem, a unified MGF-based approach for the numerical evaluation of $R(\theta)$ will be presented next that provides a generic single integral expression
for the EC of EGC and MRC diversity combiners over arbitrarily correlated generalized fading channels.

\section{The Proposed MGF-based approach}\label{Sec:Result}
\subsection{Main Result}
In order to conveniently present our generic approach, the following lemma is proved first.
\begin{lemma*}
Let $X$ be a positive RV with PDF $f_X(x)$ and MGF $\mathcal{M}_X(u)$. Let also $g(X)$ be a function of $X$ for which it is assumed that the following inverse Laplace transform $h(u) = \mathbb{L}^{-1}\left\{g(x); x; u\right\}$
exists. Then, the expectation $\mathbb{E}\langle g(X) \rangle$ can be expressed in terms of $\mathcal{M}_X(u)$ as
\begin{equation}
\mathbb{E}\langle g(X) \rangle = \int_0^{\infty} h(u) \mathcal{M}_X(u) \mathrm{d}u.
\end{equation}
 \end{lemma*}
\begin{proof}
By exploiting the definition of the MGF, i.e., $\mathcal{M}_{X} (u) \triangleq \int_0^{\infty} \exp(-u x)f_{X}(x)\mathrm{d}x$, the expectation $\mathbb{E}\langle g(X) \rangle$ can be written as
\begin{equation}
\begin{split}
\mathbb{E}\langle g(X) \rangle & = \int_0^{\infty}f_X(x)g(x)\mathrm{d}x \\
& = \int_0^{\infty}f_X(x)\left[\int_0^{\infty}\exp(-x\,u) h(u) \mathrm{d}u\right]\mathrm{d}x \\
& = \int_0^{\infty}h(u)\left[\int_0^{\infty}\exp(-x\,u) f_X(x) \mathrm{d}x\right]\mathrm{d}u \\
& = \int_0^{\infty}h(u)\mathcal{M}_{X} (u)\mathrm{d}u,
\end{split}
\end{equation}
which completes the proof.
\end{proof}
Based on the above lemma, a unified MGF-based approach for the evaluation of the EC of diversity receivers can be deduced as follows.
\begin{proposition}\label{Prop:Proposition1}
The EC of $L$-branch diversity receivers over arbitrary
not necessarily independent nor identically distributed generalized fading channels can be expressed in terms of a single integral as
\begin{equation}\label{Eq:ERMGFbased}
R(\theta) = -\frac{1}{\theta T B}\ln{\left[\frac{1}{\Gamma(A)}\int_0^{\infty} C_{q}(u)\mathcal{M}_{\vec{{\mathbf{R}}}^p}(K_{p,q}u)\mathrm{d}u  \right]}
\end{equation}
where $\mathcal{M}_{\vec{{\mathbf{R}}}^p}(u) \triangleq \mathbb{E}\langle\exp(-u\sum_{\ell=1}^L \mathcal{R}^p )\rangle$ is the joint MGF of
the $p$-th exponent of the random vector $\vec{{\mathbf{R}}}$, $K_{p,q} = \sqrt[q]{E_s/N_0 L^{(p-q-1)/2}}$ and the function $C_{q}(u)$ is given by
\begin{equation}\label{Eq:aux}
C_{q}(u) = \begin{cases}
    (2\pi)^{\frac{q-1}{2}}\frac{\sqrt{q}}{u}G\substack{1,1\\q+1,1} \left[ \frac{1}{u^q}  \left| \substack{ 1-A, \Delta(q,0) \\ 0 } \right.\right], &  q \in \mathbb{N} \\
    \\
    \frac{u^{-1}}{q}H\substack{1,1\\1,2} \left[ u  \left| \substack{ (1,1/q) \\ (A,1/q), (1,1) } \right.\right] & q \in \mathbb{R}^+.
  \end{cases}
\end{equation}
\end{proposition}
\begin{proof}
Let us define the RV $X \triangleq \sum_{\ell=1}^L \mathcal{R}^p$ and the auxiliary function $C_{q}(u) = \mathbb{L}^{-1}\left\{\left(1+x^q\right)^{-A}; x; u\right\}$. Then, by
employing the previous lemma,
the expectation of $\left(1+X^q\right)^{-A}$ can be expressed as
\begin{align}
 \mathbb{E}\left\langle\left(1+X^q\right)^{-A}\right\rangle = \int_0^{\infty}C_{q}(u)\mathcal{M}_X(u)\mathrm{d}u.\nonumber
\end{align}
By considering the identity \cite[Eq. (8.4.2.5)]{B:Prudnikov3}
\begin{equation}
\begin{split}
(1+x^q)^{-A} & = \frac{1}{\Gamma(A)}G\substack{1,1\\1,1} \left[ x^{q}  \left| \substack{ 1-A \\ 0 } \right.\right] \\
& =  \frac{1}{\Gamma(A)}H\substack{1,1\\1,1} \left[ x^{q}  \left| \substack{ (1-A,1) \\ (0,1) } \right.\right]
\end{split}
\end{equation}
and by employing \cite[Eq. (3.38.1)]{B:Prudnikov5} and \cite[Eq. (2.21), Eq. (1.58), Eq. (1.59)]{B:HFox}, $C_{q}(u)$ can be deduced as \eqref{Eq:aux}
yielding \eqref{Eq:ERMGFbased}, which completes the proof.
\end{proof}
It is noted that for the special case of uncorrelated $\mathcal{R}_\ell$, the EC of diversity receivers can be readily evaluated by employing the
following corollary.
\begin{corollary}
The EC of diversity receivers with $L$ independent diversity branches is deduced as
\begin{equation}\label{Eq:uncorrelated}
\begin{split}
R(\theta) = -\frac{1}{\theta T B}\ln{\left[\frac{1}{\Gamma(A)}\int_0^{\infty} C_{q}(u)\prod_{\ell=1}^L\mathcal{M}_{\mathcal{R}_\ell^p}(K_{p,q}u)\mathrm{d}u \right]}
\end{split}
\end{equation}
\end{corollary}
\begin{proof}
When independent branches are considered, $\mathcal{M}_{\vec{{\mathbf{R}}}^p}(K_{p,q}u)$ can be expressed as the product of the MGFs of $\mathcal{R}^p$, i.e.
$\mathcal{M}_{\vec{{\mathbf{R}}}^p}(K_{p,q}u) = \prod_{\ell=1}^L \mathcal{M}_{\mathcal{R}_\ell^p}(K_{p,q}u)$. Then, \eqref{Eq:uncorrelated} is readily obtained from \eqref{Eq:ERMGFbased}.
\end{proof}
Using Proposition~\ref{Prop:Proposition1}, for the special cases of MRC and EGC diversity receivers, i.e. for $q = 1$ and $2$, respectively, it can be shown that $C_{q}(u)$ can be expressed in terms of exponential and Bessel functions.
Specifically, for $q = 1$, by employing \cite[Eq. (8.4.3.1)]{B:Prudnikov3}, \cite[Eq. (8.2.2.8)]{B:Prudnikov3} and \cite[Eq. (8.2.2.16)]{B:Prudnikov3}, $C_{q}(u)$ can be deduced as
\begin{equation}\label{Eq:auxMRC}
C_{q}^{\mathrm{MRC}}(u) = u^{A-1}\exp(-u),
\end{equation}
which is in perfect agreement with \cite[Eq. (8)]{J:You}, \cite[Eq. (5)]{J:PeppasEC} and \cite[Eq. (7)]{J:Ji1}.
For $q = 2$, by employing \cite[Eq. (8.4.19.1)]{B:Prudnikov3}, \cite[Eq. (8.2.2.8)]{B:Prudnikov3} and \cite[Eq. (8.2.2.16)]{B:Prudnikov3}, $C_{q}(u)$ can be deduced as
\begin{equation}\label{Eq:auxEGC}
C_{q}^{\mathrm{EGC}}(u) = \sqrt{\pi}\left(\frac{u}{2}\right)^{A-1/2}J_{A-1/2}(u).
\end{equation}

Note that \eqref{Eq:auxMRC} has been derived in \cite{J:Ji1} and \cite{J:You} for the EC performance of MISO systems with transmit MRC over arbitrary generalized fading channels by using
\cite[Eq. (1.512/4)]{B:Gradshteyn_00} and the definition of the MGF.
However, our approach is more general as it includes this result as a special case\footnote{We became aware of \cite{J:Ji1} after we have completed the research reported in this paper. It is also noted that \cite{J:You} does not give reference to \cite{J:Ji1}.}.
Moreover, it is important to underline the generality of \eqref{Eq:auxMRC}, as it is also valid for other $L$-branch diversity structures. For example, in addition to the $L_p$-norm diversity schemes, it can be used to obtain the performance of selection diversity (SD) and generalized selection combining (GSC). In fact, the proposed analysis is valid as long as the MGF of $\gamma_{\rm end}$ is readily available.
For example, on the one hand, for the SD, $\gamma_{\rm end}$ can be expressed as $\gamma_{\rm end} = \max\{\mathcal{R}^2_1, \mathcal{R}^2_2, \ldots, \mathcal{R}^2_L\} $.
On the other hand,  for the GSC diversity scheme, the $K$ strongest branches out of $L$ available are combined as $\gamma_{\rm end} = \sum_{k=1}^K \mathcal{R}^2_{(k)}$ where $\mathcal{R}^2_{(1)} > \mathcal{R}^2_{(2)} > \ldots > \mathcal{R}^2_{(L)}$. For these two diversity schemes, EC can be computed by employing \eqref{Eq:ERMGFbased} with $C_{q}(u)$ given by \eqref{Eq:auxMRC}.
However, since analyzing the performance of such schemes is beyond the main scope of this paper, we will be focusing for the rest of the paper on the performance analysis of EGC and MRC diversity schemes.


\subsection{Computational Issues}
In the case of MRC, the numerical evaluation of EC is rather straightforward because $C_{q}^{\mathrm{MRC}}(u)$ is a smooth and well-behaving function. Specifically, EC can be evaluated numerically by employing a $N$-point Gauss-Chebyshev quadrature technique. Alternatively, standard built-in functions for numerical integration available in the most popular mathematical software packages, such as Matlab, Maple or Mathematica can be also used. For obtaining EGC, however, the numerical evaluation of a Hankel transform is required, rendering the evaluation considerably more difficult, since $C_{q}^{\mathrm{EGC}}(u)$ is oscillatory. However, efficient methods for the numerical evaluation of the Hankel transform are available in \cite{J:CreeHankel}.
Thus, the integrals in \eqref{Eq:ERMGFbased} are first expressed as
\begin{align}\label{Eq:Hankel}
\int_0^{\infty} C_{q}^{\mathrm{EGC}}(u)\mathcal{M}_{\vec{{\mathbf{R}}}^p}(K_{p,q}u)\mathrm{d}u = \nonumber \\
\sum_{k = 0}^\infty\int_{u_k}^{u_{k+1}} C_{q}^{\mathrm{EGC}}(u)\mathcal{M}_{\vec{{\mathbf{R}}}^p}(K_{p,q}u)\mathrm{d}u
\end{align}
where $u_0 = 0$ and $u_k$ is the $k$-th zero of $J_{A-1/2}(u)$, for $k \geq 1$.
The series in \eqref{Eq:Hankel}, however, is alternating so that a large number of terms may be required to achieve a sufficient numerical accuracy. Fortunately, by
invoking a convergence acceleration algorithm this series can be transformed into another that converges faster. Such an algorithm is the so-called Epsilon algorithm \cite{J:Shanks}.
The algorithm generates a two-dimensional array
called the $\epsilon$-table with entries $\epsilon_r^{(k)}$ where
$r$ is the column index and $k$ the location down the column.
At the initialization phase, the first column is set to zero as
$\epsilon_{-1}^{(k)} = 0$, $\forall k > 0$ and the second column is set to the given
partial sums $s_k$ of \eqref{Eq:Hankel}, i.e.  $\epsilon_{0}^{(k)} = s_k$, $k = 0,1, \ldots N-1$, where $N$ is the number of truncation terms. The remaining
elements of the $\epsilon$-table are deduced as
\begin{equation}
\epsilon_{r+1}^{(k)} = \epsilon_{r-1}^{(k+1)}+\frac{1}{\epsilon_{r}^{(k+1)} - \epsilon_{r}^{(k)}}, \, r = 1, 2, \ldots \,\,.
\end{equation}
The even columns of the $\epsilon$-table contain increasingly accurate estimates of the infinite series.

\section{Closed-form expressions for the EC performance of dual diversity receivers}\label{Sec:Anal}

In this section, by employing the MGF-based approach presented in Proposition~\ref{Prop:Proposition1}, closed-form expressions for the EC performance of dual, i.e. $L = 2$, diversity receivers are presented.
Hereafter, it is assumed that the receivers operate in the presence of GSNM fading channels
and AWGN. Thus, the unified MGF of the GSNM envelope distribution can be expressed as \cite[Eq. (24)]{J:YilmazEGC}
\begin{equation}\label{Eq:unifiedMGF}
\begin{split}
\mathcal{M}_{{\mathcal{R}_\ell^p}}(u) &= \frac{2}{\Gamma(m_{s\ell})\Gamma(m_{\ell})} \\
& \times H\substack{2,1\\1,2} \left[ \left(\frac{b_{\ell}m_{s\ell}}{\Omega_{s\ell}}\right)^p \frac{1}{u^2} \left| \substack{ (1,2) \\ (m_{s\ell},p),\, (m_{\ell},2p/\beta_{\ell}) } \right.\right]
\end{split}
\end{equation}
where $m_\ell$ $(0.5 \leq m_\ell < \infty)$ and $\beta_\ell$ $(0 \leq \beta_\ell < \infty)$ are the fading figure and the shaping factor, respectively, $m_{s\ell}$ $(0.5 \leq m_{s\ell} < \infty)$ and $\Omega_{s\ell}$ $(0 \leq \Omega_{s\ell} < \infty)$ denote the severity and the average power of shadowing, respectively, and $b_{\ell} = \Gamma(m_\ell+2/\beta_\ell)/\Gamma(m_\ell)$.

In the following proposition, a generic expression for the EC of $L_p$-norm receivers operating over GSNM fading channels is presented.
\begin{proposition}\label{Prop:DualDiversity}
The EC of $L_p$-norm diversity receivers operating over GSNM fading channels is given as \eqref{Eq:generic} (on the top of next page).
\begin{figure*}[!t]
\setcounter{mytempeqncnt}{\value{equation}}
\setcounter{equation}{14}
\begin{equation}\label{Eq:generic}
\begin{split}
R(\theta) &=
-\frac{1}{\theta T_f B}\ln\left\{
\frac{(2\pi)^{q/2-1/2}}{\sqrt{q}\Gamma(m_{s1})\Gamma(m_{s2})\Gamma(m_1)\Gamma(m_2)\Gamma(A)\pi}
 \right. H \, \substack{ 0 , 1 : 2 , 2 : 2 , 2 \\ q , 0 :2, 2 : 2 , 2} \left[
\left(\frac{\Omega_{s1}}{b_1m_{s1}}\right)^p\frac{K_{p,q}^2q^4}{4}
,
\left(\frac{\Omega_{s2}}{b_2m_{s2}}\right)^p\frac{K_{p,q}^2q^4}{4}\right.\\
& \left.
\left.
\left| \substack{ \left(1-A;\frac{2}{q}, \frac{2}{q}\right) \left(\frac{1}{q};\frac{2}{q}, \frac{2}{q}\right), \left(\frac{2}{q};\frac{2}{q}, \frac{2}{q}\right), \ldots, \left(\frac{q-1}{q};\frac{2}{q}, \frac{2}{q}\right) \\ - }  \right.
\left| \substack{ (1-m_{s1},p), \left(1-m_{1},\frac{2p}{\beta_1}\right) \\ \left(\frac{1}{2}, 1\right), (1,1)}  \right.
\left| \substack{ (1-m_{s1},p), \left(1-m_{1},\frac{2p}{\beta_1}\right) \\ \left(\frac{1}{2}, 1\right), (1,1)}  \right.
 \right]
 \right\}
\end{split}
\end{equation}
\hrulefill 
\setcounter{equation}{15}
\end{figure*}
\end{proposition}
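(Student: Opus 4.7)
My plan is to evaluate the single integral of \eqref{Eq:ERMGFbased} (for the case $q>0$) by substituting Mellin--Barnes representations of all Fox H-functions involved, interchanging the orders of integration, and then recognising the resulting double contour integral as a bivariate Fox H-function. Assuming independent branches, I would first write $\mathcal{M}_{\vec{{\mathbf{R}}}^p}(\mathrm{K}u) = \mathcal{M}_{\mathcal{R}_1^p}(\mathrm{K}u)\,\mathcal{M}_{\mathcal{R}_2^p}(\mathrm{K}u)$, insert the Fox H-function form of $C_q(u)$ from \eqref{Eq:aux} and the GSNM MGF representation \eqref{Eq:unifiedMGF}, thus reducing \eqref{Eq:ERMGFbased} to a single $u$-integral of the product of three Fox H-functions. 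Each H-function is then replaced by its defining Mellin--Barnes contour integral, yielding a triple contour integral in variables $s_0,s_1,s_2$ whose integrand, after separating the $u$-dependence, factors as a power of $u$ linear in the $s_i$ multiplied by Gamma-function kernels.

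Next, I would interchange the order of integration and perform the $u$-integration first. Since the $u$-integrand is a monomial $u^{\alpha(s_0,s_1,s_2)-1}$, the $u$-integral is interpreted in the Mellin-transform sense and collapses to a constraint $\alpha(s_0,s_1,s_2)=0$, which can be used to eliminate the Mellin variable $s_0$ associated with $C_q(u)$ in favour of a linear combination of $s_1,s_2$. What remains is a double Mellin--Barnes integral in $s_1,s_2$, whose integrand contains (i) two independent Gamma-function groups originating from the two MGFs, (ii) a ``mixed'' Gamma group involving $s_1+s_2$ arising from the $C_q$-function after elimination of $s_0$. Crucially, the mixed group contains a Gamma of the form $\Gamma\!\bigl(-q\,w(s_1,s_2)\bigr)$ (in the denominator). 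Applying the Gauss multiplication theorem
\begin{equation*}
\Gamma(q z) \;=\; q^{\,q z-1/2}(2\pi)^{(1-q)/2}\prod_{k=0}^{q-1}\Gamma\!\left(z+\tfrac{k}{q}\right)
\end{equation*}
to split this Gamma into $q$ factors is the key step: it produces precisely the $q$ upper parameters $(1-A;2/q,2/q),(1/q;2/q,2/q),\ldots,((q-1)/q;2/q,2/q)$ of the target bivariate H-function, supplies the normalising prefactor $(2\pi)^{q/2-1/2}/\sqrt{q}$, and—via the factor $q^{qz}$—rescales each argument by an extra power of $q$, accounting for the $q^{4}$ present in the arguments $\left(\Omega_{s\ell}/(b_\ell m_{s\ell})\right)^p\mathrm{K}^{2}q^{4}/4$. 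A further single application of Legendre's duplication formula $\Gamma(2s)=2^{2s-1}\pi^{-1/2}\Gamma(s)\Gamma(s+1/2)$ on the $\Gamma(-2 s_\ell)$ factor of each MGF converts that factor into the pair of lower parameters $(1/2,1),(1,1)$ listed in \eqref{Eq:generic} and contributes the factor $1/4$ and a $1/\pi$ to the constant prefactor. Finally, matching the resulting double Mellin--Barnes integrand against the definition of the bivariate Fox H-function \cite{B:HFox} with orders $(m_1,n_1{:}m_2,n_2{:}m_3,n_3)=(0,1{:}2,2{:}2,2)$ and $(p_1,q_1{:}p_2,q_2{:}p_3,q_3)=(q,0{:}2,2{:}2,2)$ delivers \eqref{Eq:generic}; reinserting the enveloping $-1/(\theta T B)\ln(\cdot)$ from Proposition~\ref{Prop:Proposition1} completes the proof.

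The hard part will be the purely mechanical but error-prone bookkeeping: after the three Mellin--Barnes substitutions, the duplication, the Gauss splitting, and the sign change $s_i\mapsto -s_i$ that is typically needed to place the parameters in the standard $(a;A,B)$ form, one must verify that \emph{every} Gamma factor and \emph{every} constant prefactor lines up with the notation of \cite{B:HFox}, including the somewhat non-obvious power $q^{4}/4$ in the argument and the $(2\pi)^{q/2-1/2}/\sqrt{q}$ normalisation. Convergence of the intermediate contour integrals and legitimacy of the order-of-integration swap can be justified by standard arguments for Fox H-function integrals (the Mellin--Barnes contours of each H-function lie in the strip of absolute convergence of the corresponding Mellin transform); this verification is routine but must be briefly acknowledged.
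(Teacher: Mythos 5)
Your proposal follows essentially the same route as the paper's proof: Mellin--Barnes representations of the two GSNM MGFs, interchange of integration order, reduction of the $u$-integral to a ratio of Gamma functions, the Gauss multiplication theorem applied to the resulting $\Gamma(q\,\cdot)$ factor (which yields the $q$ upper parameters, the $(2\pi)^{q/2-1/2}/\sqrt{q}$ prefactor and the extra powers of $q$ in the arguments), the duplication formula supplying the lower parameters $(1/2,1),(1,1)$, and finally identification of the double contour integral with the bivariate Fox H-function. The only difference is cosmetic: the paper evaluates the $u$-integral directly via the closed-form Mellin transform of the H-function $C_q(u)$ (so only the two contours $t_1,t_2$ ever appear), instead of introducing a third Mellin variable and collapsing it through the divergent-monomial ``constraint,'' which is a formally equivalent but less clean version of the same step.
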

\begin{proof}
From the definition of the Fox's H-function and by using the duplication formula for the gamma function \cite[Eq. (8.335/1)]{B:Gradshteyn_00}, the unified MGF in \eqref{Eq:unifiedMGF} can be expressed in terms of a Mellin-Barnes contour integral as
\begin{equation}\label{Eq:mellinfrac}
\begin{split}
\mathcal{M}_{{\mathcal{R}_\ell^p}}(u) &=
\frac{C_{\ell}}{2\pi\imath}\int_{\mathcal{C}_{\ell}}\Gamma(-t)\Gamma\left(\frac{1}{2}-t\right)
 \\
& \times
\Gamma(m_{s\ell}+p t)\Gamma\left(m_{\ell}+\frac{2p}{\beta_{\ell}}t\right)z_{\ell}^{-t}u^{2t}
\mathrm{d}t
\end{split}
\end{equation}
where $C_\ell = [{\sqrt{\pi}\Gamma(m_{s\ell})\Gamma(m_{\ell})}]^{-1}$ and $z_{\ell} = 4(b_{\ell}m_{s\ell}/\Omega_{s\ell})^p$
and $\mathrm{Re}\{t\} < -\min\{ m_{s\ell}/p, m_\ell\beta_\ell/(2p) \}$.
By substituting \eqref{Eq:mellinfrac} into \eqref{Eq:ERMGFbased} the following integral should be evaluated
\begin{equation}\label{Eq:int1}
\begin{split}
\mathcal{I} & = \left[\prod_{\ell =1}^2 C_{\ell}\right] \frac{1}{(2\pi\imath)^2}
\int_0^{\infty}\int_{\mathcal{C}_1}\int_{\mathcal{C}_2}
\prod_{\ell=1}^2\left[\Gamma(-t_{\ell})\Gamma\left(\frac{1}{2}-t_{\ell}\right) \right. \\
& \times \left.
\Gamma(m_{s\ell}+p t_{\ell})\Gamma\left(m_{\ell}+\frac{2p}{\beta_{\ell}}t_{\ell}\right)z_{\ell}^{-t_{\ell}}(K_{p,q}u)^{2t_{\ell}}
C_{q}(u)
\right] \\
& \times \mathrm{d}t_1\mathrm{d}t_2\mathrm{d}u.
\end{split}
\end{equation}
By employing the H-function representation of the auxiliary function $C_{q}(u)$, as well as \cite[Eq. (2.8)]{B:HFox}, the integral with respect to $u$
can be evaluated in closed form as
\begin{equation}
\begin{split}
&\int_0^{\infty}C_{q}(u)u^{2t_1+2t_2}\mathrm{d}u  = \\
& \frac{\Gamma(A+2t_1/q+2t_2/q)\Gamma(-2t_1/q-2t_2/q)}{\Gamma(-2t_1-2t_2)}.
\end{split}
\end{equation}
where $\mathrm{Re}\{A+2t_1/q+2t_2/q\} > 0$ and $\mathrm{Re}\{-2t_1/q-2t_2/q\} > 0$.
By further employing the multiplication formula of the Gamma function \cite[Eq. (8.335/1)]{B:Gradshteyn_00}, $\mathcal{I}$ can be expressed as the following two-fold Mellin-Barnes contour integral
\begin{equation}
\mathcal{I} = -\frac{\mathcal{D}}{4\pi^2}\int_{\mathcal{C}_1}\int_{\mathcal{C}_2}\phi(t_1, t_2)\phi_1(t_1)\phi_2(t_2)x_1^{t_1}x_2^{t_2}\mathrm{d}t_1\mathrm{d}t_2
\end{equation}
where
\begin{subequations}
\begin{equation}
\mathcal{D} = \frac{(2\pi)^{q/2-1/2}}{\sqrt{q}\Gamma(m_{s1})\Gamma(m_{s2})\Gamma(m_1)\Gamma(m_2)\Gamma(A)\pi},
\end{equation}
\begin{equation}
\phi(t_1, t_2) = \frac{\Gamma(A+2t_1/q+2t_2/q)}{\prod_{i=2}^q\Gamma\left(\frac{i-1}{q}-2t_1/q-2t_2/q\right)},
\end{equation}
\begin{equation}
\phi_\ell(t_\ell) = \Gamma(m_{s\ell}+pt_{\ell})\Gamma\left(m_{\ell}+\frac{2pt_{\ell}}{\beta_{\ell}}\right)\Gamma(0.5-t_{\ell})\Gamma(-t_{\ell}),
\end{equation}
\begin{equation}
x_\ell = \left(\frac{\Omega_{s_{\ell}}}{b_{\ell}m_{s_{\ell}}}\right)^p\frac{K_{p,q}^2q^4}{4}.
\end{equation}
\end{subequations}
Finally, from the definition of the bivariate H-function \cite[Eq. (2.56)-(2.60)]{B:HFox}, \eqref{Eq:generic} is readily obtained, thus completing the proof.
\end{proof}
Note that the bivariate H-function can be efficiently evaluated by using the Matlab algorithm presented in \cite{J:Peppas2012WCL}.
Based on Proposition~\ref{Prop:DualDiversity}, the EC of diversity receivers for several well known fading models, such as the generalized-K and the Nakagami-$m$, can be readily evaluated.
For example, let us consider dual-branch EGC diversity receivers over generalized-K fading channels. In this case, by selecting $\beta_{\ell} = 2$, $p = 1$, $q = 2$ in \eqref{Eq:generic} and by observing that the bivariate H-functions reduce to bivariate Meijer G-functions, EC is deduced as in \eqref{Eq:KG} (on the top of the next page).
\begin{figure*}[!t]
\setcounter{mytempeqncnt}{\value{equation}}
\setcounter{equation}{20}
\begin{equation}\label{Eq:KG}
\begin{split}
R(\theta) &=
-\frac{1}{\theta T_f B}\ln\left\{
\frac{1}{\sqrt{\pi}\Gamma(m_{s1})\Gamma(m_{s2})\Gamma(m_1)\Gamma(m_2)\Gamma(A)}
 \right. G \, \substack{ 0 , 1 : 2 , 2 : 2 , 2 \\ 2 , 0 :2, 2 : 2 , 2} \left[
\frac{\Omega_{s1}}{m_1m_{s1}}\frac{E_s}{2N_0}
,
\frac{\Omega_{s2}}{m_2m_{s2}}\frac{E_s}{2N_0} \right.\\
& \left.
\left.
\left| \substack{ 1-A; \frac{1}{2}, \\ - }  \right.
\left| \substack{ 1-m_{s1}, 1-m_{1} \\ \frac{1}{2}, 1}  \right.
\left| \substack{ 1-m_{s2}, 1-m_{2} \\ \frac{1}{2}, 1}  \right.
 \right]
 \right\}
\end{split}
\end{equation}
\hrulefill 
\setcounter{equation}{21}
\end{figure*}
\begin{figure*}[!t]
\setcounter{mytempeqncnt}{\value{equation}}
\begin{equation}\label{Eq:Nakagami}
\begin{split}
R(\theta) &=
-\frac{1}{\theta T_f B}\ln\left\{
\frac{1}{\sqrt{\pi}\Gamma(m_1)\Gamma(m_2)\Gamma(A)}
 \right. G \, \substack{ 0 , 1 : 2 , 1 : 2 , 1 \\ 2 , 0 :1, 2 : 1 , 2} \left[
\frac{\Omega_{s1}}{m_1}\frac{E_s}{2N_0}
,
\frac{\Omega_{s2}}{m_2}\frac{E_s}{2N_0}
\left.
\left| \substack{ 1-A; \frac{1}{2}, \\ - }  \right.
\left| \substack{  1-m_{1} \\ \frac{1}{2}, 1}  \right.
\left| \substack{  1-m_{2} \\ \frac{1}{2}, 1}  \right.
 \right]
 \right\}
\end{split}
\end{equation}
\hrulefill 
\setcounter{equation}{22}
\end{figure*}
For dual EGC diversity receivers operating over Nakagami-$m$ fading channels, by setting $m_{s\ell} \rightarrow \infty$ and employing the identity
$\lim_{x \rightarrow \infty} x^{-u}\Gamma(x+u)/\Gamma(x) = 1$ \cite[Eq. (8.328)]{B:Gradshteyn_00} along with the Mellin-Barnes integral representation of \eqref{Eq:generic}, EC is deduced as
\eqref{Eq:Nakagami} (on the top of next page).
Note that a Mathematica implementation of the bivariate G-function is available in \cite{J:Ansari}.

\section{Asymptotic Performance Analysis}\label{Sec:Asymptotic}
In this section, the asymptotic EC performance of MRC and EGC diversity receivers in the high- and low-SNR regimes is assessed.
More specifically, in the following proposition, a novel MGF-based approach for the evaluation of the EC in the high-SNR regime is first presented.
As will be seen later on, the proposed approach provides useful insights as to the factors
affecting EC performance in terms of the so-called high-SNR slope and high-SNR power offset \cite{J:ShamaiVerdu}.
\begin{proposition}\label{Prop:highSNR}
Let us assume that $\mathcal{M}_{\vec{{\mathbf{R}}}^p}(u)$ can be approximated for
$u\rightarrow \infty$ as
\begin{equation}\label{Eq:Taylor}
\mathcal{M}_{\vec{{\mathbf{R}}}^p}(u) = \mathcal{C} u^{-d} + o(u^{-d}).
\end{equation}
Then, for high SNR values, the EC of MRC and EGC diversity receivers can be approximated as
\begin{equation}\label{Eq:MRChigh}
\begin{split}
R^{\rm MRC}(\theta) & = \frac{d}{\theta T B}\left[\ln(E_s/N_0)
 -\frac{1}{d} \ln{\frac{\mathcal{C} \Gamma(A - d)}{\Gamma(A)}} \right],\,A > d
\end{split}
\end{equation}
and
\begin{equation}\label{Eq:EGChigh}
\begin{split}
R^{\rm EGC}(\theta) & = \frac{d}{2\theta T B}\left[\ln(E_s/N_0) -\ln{L} \right. \\
& \left.-\frac{2}{d} \ln{\frac{2^{-d}\sqrt{\pi}\mathcal{C} \Gamma(A - d/2)}{\Gamma(A)\Gamma\left(0.5+d/2\right)}} \right],\,d/2<A < d+1,
\end{split}
\end{equation}
respectively.
\end{proposition}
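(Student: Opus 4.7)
\emph{Proof plan.} The approach is to substitute the tail expansion \eqref{Eq:Taylor} into the MGF-based integral representation of Proposition~\ref{Prop:Proposition1} and evaluate the resulting scalar integral in closed form. Because $E_s/N_0$ enters the argument of $\mathcal{M}_{\vec{{\mathbf{R}}}^p}$ as a linear multiplicative factor (through $\mathrm{K}$), the high-SNR limit drives that argument to infinity on every finite interval and \eqref{Eq:Taylor} can be inserted under the integral sign, giving $\mathbb{E}\langle(1+\gamma_{\rm end})^{-A}\rangle \approx \mathcal{C}\,\mathrm{K}^{-d}\int_0^\infty u^{-d} C_q(u)\,\mathrm{d}u$. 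What remains is to evaluate this integral for $C_1(u)$ and $C_2(u)$ as given by \eqref{Eq:auxMRC} and \eqref{Eq:auxEGC} respectively, and then to apply $-(\theta T B)^{-1}\ln(\cdot)$.

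For MRC ($q=1$), using $C_1(u)\propto u^{A-1}e^{-u}/\Gamma(A)$ reduces the remaining integral to the Gamma form $\int_0^\infty u^{A-d-1}e^{-u}\,\mathrm{d}u = \Gamma(A-d)$, which converges at the origin exactly when $A>d$; this is the stated hypothesis. Using $\mathrm{K}\propto E_s/N_0$ and factoring $d/(\theta T B)$ out of the logarithm then reproduces \eqref{Eq:MRChigh}. For EGC ($q=2$), $C_2(u)\propto (u/2)^{A-1/2}J_{A-1/2}(u)/\Gamma(A)$ reduces the integral to $\int_0^\infty u^{A-d-1/2}J_{A-1/2}(u)\,\mathrm{d}u$, which is the classical Weber--Schafheitlin integral, evaluable in closed form as $2^{A-d-1/2}\Gamma(A-d/2)/\Gamma((d+1)/2)$. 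Its convergence window $-\nu<\mu<3/2$ with $\mu=A-d+1/2$ and $\nu=A-1/2$ collapses precisely to $d/2<A<d+1$, matching the stated hypothesis. Substituting $\mathrm{K}^{-d}\propto L^{d/2}(E_s/N_0)^{-d/2}$ and again taking $-(\theta T B)^{-1}\ln(\cdot)$ yields \eqref{Eq:EGChigh} after elementary rearrangement.

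The principal obstacle is justifying the interchange of the asymptotic approximation with the integration in the EGC case, since $C_2(u)$ is oscillatory (as already observed in the discussion around \eqref{Eq:Hankel}) and the Weber--Schafheitlin integral is only conditionally convergent near the upper endpoint $A=d+1$. A clean remedy is to write $\mathcal{M}_{\vec{{\mathbf{R}}}^p}(\mathrm{K}u) = \mathcal{C}(\mathrm{K}u)^{-d} + r(\mathrm{K}u)$ with $r(v)=o(v^{-d})$ and then argue, e.g.\ via a Mellin-contour shift exploiting the H-function representation of $C_2(u)$, that the residual integral is strictly $o(\mathrm{K}^{-d})$. No such complication arises in the MRC case, where the exponential factor in $C_1(u)$ provides an integrable majorant and dominated convergence applies directly.
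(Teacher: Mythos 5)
Your proposal follows essentially the same route as the paper's proof: substitute the high-SNR expansion \eqref{Eq:Taylor} into the single-integral representation of Proposition~\ref{Prop:Proposition1} with $C_q^{\mathrm{MRC}}$ and $C_q^{\mathrm{EGC}}$, evaluate the MRC case via the Gamma-function definition and the EGC case via the Weber--Schafheitlin integral (the paper's cited \cite[Eq.~(6.561/14)]{B:Gradshteyn_00}), and read off both the closed forms and the validity ranges $A>d$ and $d/2<A<d+1$. Your extra remark on rigorously controlling the $o((\mathrm{K}u)^{-d})$ remainder in the oscillatory EGC integral is a refinement the paper does not attempt, but it does not alter the argument.
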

\begin{proof}
Following \cite{J:WangYannakis}, for high SNR values, $\mathcal{M}_{\vec{{\mathbf{R}}}^p}(u)$ can be approximated for
$u\rightarrow \infty$ as in \eqref{Eq:Taylor}.
Then, by substituting \eqref{Eq:Taylor} into \eqref{Eq:ERMGFbased}, and employing \eqref{Eq:auxMRC} and \eqref{Eq:auxEGC} along with \cite[Eq. (6.561/14)]{B:Gradshteyn_00} and the definition of the gamma function, asymptotic expressions for the EC of MRC and EGC diversity receivers, respectively, can be deduced as \eqref{Eq:MRChigh} and \eqref{Eq:EGChigh}, respectively.
\end{proof}
As it can be observed, both asymptotic expressions are of the form
\begin{equation}
R(\theta) = \mathcal{S}_{\infty}(\log_2(E_s/N_0) - \mathcal{L}_{\infty})
\end{equation}
where $\mathcal{S}_{\infty}$ is the high-SNR slope, defined as \cite{J:Verdu}
\begin{equation}
\mathcal{S}_{\infty} \triangleq \lim_{E_s/N_0 = 0}\frac{R(\theta)}{E_s/N_0}
\end{equation}
and $\mathcal{L}_{\infty}$ the high-SNR power offset defined as \cite{J:Verdu}
\begin{equation}
\mathcal{L}_{\infty} \triangleq \lim_{E_s/N_0 = 0}\left(\log_2(E_s/N_0)-\frac{R(\theta)}{\mathcal{S}_{\infty}} \right)
\end{equation}
Using \eqref{Eq:Taylor}, $\mathcal{S}_{\infty}$ and $\mathcal{L}_{\infty}$ can be obtained in a straightforward manner for any fading model, 
provided that the MGF $\mathcal{M}_{\vec{{\mathbf{R}}}^p}(u)$ is readily available.

Next, the low-SNR performance of the considered system is assessed in terms of
the minimum normalized energy per information bit to reliably convey any positive rate and the wide-band slope \cite{J:Verdu}.
\begin{proposition}\label{Prop:lowSNR}
The low-SNR EC performance of diversity receivers can be assessed by using a second-order expansion of the EC around $E_s/N_0 \rightarrow 0^+$ as 
\begin{equation}\label{Eq:RLow}
R(\theta) = \dot{R}(0)\frac{E_s}{N_0}+\frac{\ddot{R}(0)}{2}\left(\frac{E_s}{N_0}\right)^2+o\left(\frac{E_s}{N_0}\right),
\end{equation}
where $\dot{R}(0)$ and $\ddot{R}(0)$ denote the first and second order
derivatives of the EC with respect to the SNR, ${E_s}/{N_0}$, and are given by
\begin{equation}\label{Eq:dots}
\dot{R}(0) = \frac{1}{\sqrt{L^{1-p+q}}\ln 2}(-1)^p \left.\frac{\partial^p \mathcal{M}_{\vec{{\mathbf{R}}}^p}(u)}{\partial u^p}\right|_{u = 0}
\end{equation}
\begin{equation}\label{Eq:dots2}
\begin{split}
\ddot{R}(0) &= -\frac{A+1}{{L^{1-p+q}}\ln 2}\left.\frac{\partial^{2p} \mathcal{M}_{\vec{{\mathbf{R}}}^{2p}}(u)}{\partial u^{2p}}\right|_{u = 0} \\
 & +\frac{A}{{L^{1-p+q}}\ln 2}\left(\left.\frac{\partial^p \mathcal{M}_{\vec{{\mathbf{R}}}^p}(u)}{\partial u^p}\right|_{u = 0}\right)^2.
\end{split}
\end{equation}
\end{proposition}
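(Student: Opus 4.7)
The plan is to Taylor-expand $R^{\rm ORA}(\theta)$ in the small parameter $\rho \triangleq E_s/N_0$ up to order $\rho^2$ and match coefficients against \eqref{Eq:RLow}. Writing $Y \triangleq \left(\sum_{\ell=1}^L \mathcal{R}_\ell^p\right)^q$ so that $\gamma_{\rm end} = \mathrm{K}\rho Y$, the starting point is the EC expression $R^{\rm ORA}(\theta) = -\tfrac{1}{\theta T B}\ln \mathbb{E}\langle (1+\mathrm{K}\rho Y)^{-A}\rangle$ obtained from \eqref{Eq:Defcapacity}.

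First I would use the generalized-binomial series $(1+\mathrm{K}\rho Y)^{-A} = 1 - A\mathrm{K}\rho Y + \tfrac{A(A+1)}{2}\mathrm{K}^2\rho^2 Y^2 + o(\rho^2)$, then take expectations term-wise to obtain $\mathbb{E}\langle (1+\mathrm{K}\rho Y)^{-A}\rangle = 1 + \epsilon(\rho)$ with $\epsilon(\rho) = -A\mathrm{K}\rho\,\mathbb{E}\langle Y\rangle + \tfrac{A(A+1)\mathrm{K}^2}{2}\rho^2\,\mathbb{E}\langle Y^2\rangle + o(\rho^2)$. Composing with $\ln(1+\epsilon) = \epsilon - \epsilon^2/2 + O(\epsilon^3)$ produces one additional quadratic contribution $-\tfrac{A^2\mathrm{K}^2}{2}\rho^2(\mathbb{E}\langle Y\rangle)^2$ coming from the square of the linear part. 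Dividing by $-\theta T B$, collecting powers of $\rho$, and invoking $A/(\theta T B) = 1/\ln 2$ then delivers $\dot{R}(0) = \mathrm{K}\,\mathbb{E}\langle Y\rangle/\ln 2$ together with $\ddot{R}(0)/2 = -\mathrm{K}^2(A+1)\mathbb{E}\langle Y^2\rangle/(2\ln 2) + A\mathrm{K}^2(\mathbb{E}\langle Y\rangle)^2/(2\ln 2)$, which already matches the structure of \eqref{Eq:dots}--\eqref{Eq:dots2}.

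The final step converts the two moments into derivatives of the joint MGF. Differentiating $\mathcal{M}_{\vec{\mathbf{R}}^p}(u) = \mathbb{E}\langle \exp(-u\sum_\ell \mathcal{R}_\ell^p)\rangle$ of order $q$ and evaluating at $u=0$ returns $(-1)^q\mathbb{E}\langle Y\rangle$, while the $2q$-th derivative at zero returns $\mathbb{E}\langle Y^2\rangle$; substituting these identities into the coefficients obtained above reproduces exactly \eqref{Eq:dots} and \eqref{Eq:dots2}.

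The main obstacle is the rigorous justification of the term-wise expectation and of the logarithm composition, which requires controlling the $o(\rho^2)$ remainder of the binomial expansion; this reduces to finiteness of $\mathbb{E}\langle Y^3\rangle$ and is routine for the fading models considered. A secondary subtlety is the interpretation of the fractional derivatives $\partial^q/\partial u^q$ and $\partial^{2q}/\partial u^{2q}$ when $q$ is real and non-integer: these should be understood in a Riemann--Liouville sense consistent with the moment identity $\mathbb{E}\langle Y^k\rangle = (-1)^{kq}\left.\partial^{kq}\mathcal{M}_{\vec{\mathbf{R}}^p}(u)/\partial u^{kq}\right|_{u=0}$. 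All remaining manipulations are elementary algebra and prefactor bookkeeping via $A = \theta T B/\ln 2$.
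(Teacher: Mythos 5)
Your derivation is correct and, at its core, takes the same route as the paper: a second-order expansion of $\ln\mathbb{E}\langle(1+\gamma_{\rm end})^{-A}\rangle$ around $E_s/N_0\rightarrow 0^+$ from \eqref{Eq:Defcapacity}, coefficients expressed through the first two moments of the combiner output variable, and then the moment-to-MGF-derivative identity \eqref{Eq:momsMGF}. The one real difference is that the paper does not carry out the binomial/logarithm expansion itself: it imports the low-SNR expressions for $\dot{R}(0)$ and $\ddot{R}(0)$ directly from \cite[Eq.~(22)]{J:Matthaiou} (these are \eqref{Eq:dots3}--\eqref{Eq:dots4} in its proof) and only performs the conversion to MGF derivatives, whereas you re-derive that expansion from first principles; this makes the argument self-contained and also surfaces the remainder condition (finiteness of a higher-order moment) that the citation leaves implicit. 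One point to flag: carried out for general $q$, your calculation yields the moments of $Y=X^{q}$ with $X=\sum_{\ell}\mathcal{R}_\ell^{p}$, i.e.\ derivative orders $q$ and $2q$ of $\mathcal{M}_{\vec{\mathbf{R}}^{p}}$ at $u=0$, which does not literally match the printed \eqref{Eq:dots}--\eqref{Eq:dots2}, where the orders appear as $p$ and $2p$ and the second formula invokes $\mathcal{M}_{\vec{\mathbf{R}}^{2p}}$. That mismatch traces to the paper's loose notation rather than to an error on your side: the paper's own proof applies \eqref{Eq:momsMGF} with an integer order $n$ (exactly your identity), reading $\vec{\mathbf{R}}^{np}$ as the $n$-th power of $X$, and its intermediate formulas are written in the form natural for the MRC-type case $q=1$; so your claim of reproducing \eqref{Eq:dots}--\eqref{Eq:dots2} \emph{exactly} should be softened to ``up to that notational convention,'' with your $q$, $2q$ version being the one that remains valid for EGC ($q=2$). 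Your remark about Riemann--Liouville fractional derivatives is unnecessary for the diversity schemes actually treated (integer $q$), though harmless.
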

\begin{proof}
Using \cite[Eq. (22)]{J:Matthaiou}, $\dot{R}(0)$ and $\ddot{R}(0)$ can be respectively expressed as
\begin{equation}\label{Eq:dots3}
\dot{R}(0) = \frac{1}{\sqrt{L^{1-p+q}}\ln 2}\mathbb{E}\langle \vec{{\mathbf{R}}}^p \rangle
\end{equation}
\begin{equation}\label{Eq:dots4}
\ddot{R}(0) = -\frac{A+1}{{L^{1-p+q}}\ln 2}\mathbb{E}\langle \vec{{\mathbf{R}}}^{2p} \rangle+\frac{A}{{L^{1-p+q}}\ln 2}\left(\mathbb{E}\langle \vec{{\mathbf{R}}}^p \rangle\right)^2.
\end{equation}
The expectations in \eqref{Eq:dots3} and \eqref{Eq:dots4} can be readily expressed in terms of $\mathcal{M}_{\vec{{\mathbf{R}}}^p}(u)$ as \cite[Eq. (5.67)]{B:Papoulis_02}
\begin{equation}\label{Eq:momsMGF}
\begin{split}
& \mathbb{E}\langle \vec{{\mathbf{R}}}^{n p} \rangle =  (-1)^n \left.\frac{\partial^n \mathcal{M}_{\vec{{\mathbf{R}}}^p}(u)}{\partial u^n}\right|_{u = 0}.
\end{split}
\end{equation}
yielding \eqref{Eq:dots} and \eqref{Eq:dots2}, thus completing the proof.
\end{proof}
As was pointed out in \cite{J:Matthaiou},
$\dot{R}(0)$ and $\ddot{R}(0)$ are related two metrics, namely to the minimum normalized energy
per information bit to reliably convey any positive rate and
the wide-band slope, respectively \cite{J:Verdu}.
When QoS constraints are considered, these metrics can be defined respectively as \cite[Eq. (19)]{J:Matthaiou}
\begin{equation}
\frac{E_b}{N_0}_{\min} \triangleq \frac{1}{\dot{R}(0)},\, S \triangleq -\frac{2\ln 2 \left[\dot{R}(0)\right]^2 }{\ddot{R}(0)}.
\end{equation}
Finally, it is underlined that the above asymptotic performance analysis
in both the high- and low-SNR regimes is valid for all well-known fading distributions as long as MGF of $\vec{{\mathbf{R}}}^p$, $\mathcal{M}_{\vec{{\mathbf{R}}}^p}(u)$, exists.

\section{Performance evaluation results and discussion}\label{Sec:Results}
\parskip = 0pt
In this section, the proposed analytical approach presented in the previous sections is employed to obtain the EC of diversity receivers.
More specifically, the following case studies are considered:
\emph{a)} Dual MRC and EGC diversity receivers operating over GSNM fading channels;
\emph{b)} MRC diversity receivers operating over arbitrary correlated generalized Rice fading channels; \emph{c)}
EGC diversity receivers over (i.i.d) generalized gamma (GG) fading channels; \emph{d)} $\alpha-\kappa-\mu$ and \emph{e)} $\alpha-\eta-\mu$ fading channels.
For the last three cases, in addition to the exact analysis, asymptotic performance evaluation results are also included.

\subsection{
Dual MRC and EGC over Uncorrelated GSNM Fading Channels}
Hereafter, we evaluate the EC performance of dual MRC and EGC diversity receivers over GSNM fading channels with
$(m_1, m_2) = (1.25, 2.25)$, $(m_{s1}, m_{s2}) = (2.3, 1.3)$, $(\beta_1, \beta_2) = (5/3, 4/3)$,
and $(\Omega_{s1}, \Omega_{s1}) = (3.5, 2.5)$. The limiting cases of generalized-K fading, i.e. $(\beta_1, \beta_2) = (2, 2)$, and
Nakagami-$m$ fading, i.e. $(\beta_1, \beta_2) = (2, 2)$ and  $(m_{s1}, m_{s2}) \rightarrow (\infty, \infty)$, are further considered.
Fig.~\ref{Fig:RateGSNM} depicts the EC of all four case studies as a function of $E_s/N_0$, assuming $A = 4$. It can be seen from this figure that analytical and computer simulation results match very well, thus verifying the correctness of our analysis.
\begin{figure}[!t]
\centering
\includegraphics[keepaspectratio,width=3.5in]{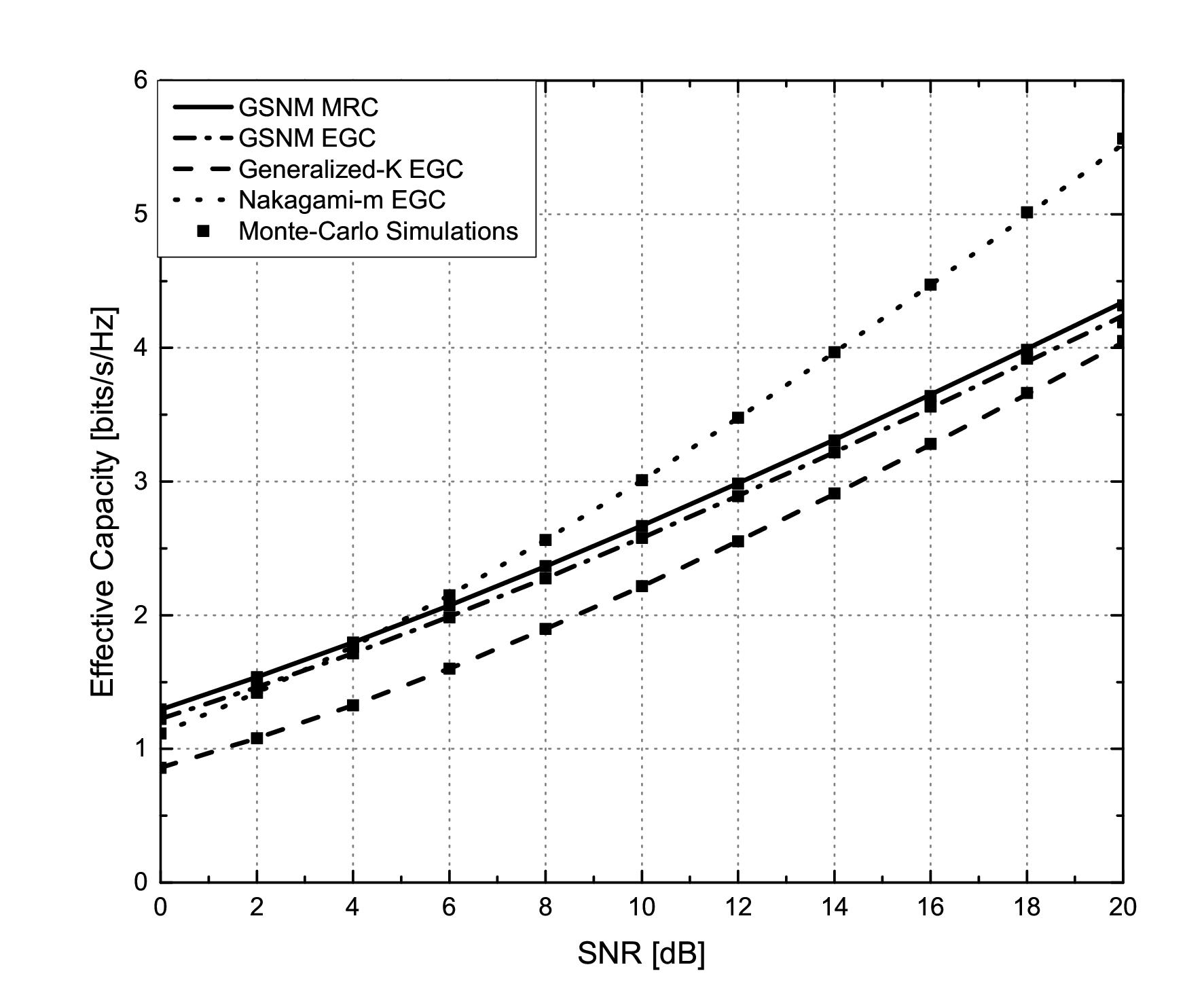}
\caption{Exact analytical EC performance evaluation results for dual MRC and EGC diversity receivers in GSNM fading}\label{Fig:RateGSNM}
\end{figure}

\subsection{
MRC over Arbitrarily Correlated Generalized Rice Fading Channels}
\parskip = 0pt
Considering generalized Rice fading channels with arbitrary correlation, the instantaneous SNR at the output of the $L$-branch receiver is given as $\gamma_{\rm end} = E_s/N_0\sum_{\ell=1}^L \mathcal{R}_\ell^2 = \mathbf{h} \mathbf{h}^H E_s/N_0$  where $\mathbf{h}$ is a $L \times 1$ complex Gaussian random vector,
having mean value $\boldsymbol{\eta}$ and covariance matrix $\mathbf{C} = \mathbb{E} \langle(\mathbf{h} -
\boldsymbol{\eta}) \,
(\mathbf{h} -
\boldsymbol{\eta})^H \rangle$.
\subsubsection{Exact Analysis}
\parskip = 0pt
By employing \cite[Eq. (18)]{J:YYilmaz}, the MGF of $\parallel\mathbf{h}\parallel^2_F$ can be deduced as
\begin{align}\label{Eq:MGFcorr}
\mathcal{M}_{\parallel\mathbf{h}\parallel^2_F}(u) = \frac{\exp\left[-u\,{\mu}\,\boldsymbol{\eta}^H\left(\mathbf{I}+u\mathbf{C}\right)^{-1}\boldsymbol{\eta}\right]}{\left[\det\left(\mathbf{I}+u\mathbf{C}\right)\right]^{\mu}}
\end{align}
where the parameter ${\mu}$ denotes the diversity order of the signal at each branch,
Note that \eqref{Eq:MGFcorr} is quite versatile, as it includes as special cases
the MGF of $\parallel\mathbf{h}\parallel^2_F$ over correlated Nakagami-$m$ ($\boldsymbol{\eta} = 0$)
\cite[Eq. (9.219)]{B:Alouini} and correlated Rice (${\mu} = 1$) fading channels \cite[Eq. (5)]{J:Hamdi}.

\subsubsection{Asymptotic Analysis}
\parskip = 0pt
For high SNR, i.e. for $s\rightarrow \infty$, the MGF of $\parallel\mathbf{h}\parallel^2_F$ in \eqref{Eq:MGFcorr} can be approximated as
\begin{align}\label{Eq:MGFcorrH}
\mathcal{M}^{\infty}_{\parallel\mathbf{h}\parallel^2_F}(u) \approx \frac{\exp\left[{\mu}\,\boldsymbol{\eta}^H\boldsymbol{\eta}\right]}{\left[\det\left(\mathbf{C}\right)\right]^{\mu}} u^{-\mu},
\end{align}
which is of the form given in \eqref{Eq:Taylor}, with $\mathcal{C} = \frac{\exp\left[{\mu}\,\boldsymbol{\eta}^H\boldsymbol{\eta}\right]}{\left[\det\left(\mathbf{C}\right)\right]^{\mu}}$ and $d = \mu$. Therefore, using \eqref{Eq:MRChigh}, the high-SNR asymptotic EC under generalized fading with MRC diversity receiver  can be easily obtained.

Using the above mentioned procedure, the exact analytical and asymptotic EC performance of the considered system with $L = 3$ branches,
has been obtained and is illustrated in Fig.~\ref{Fig:RateGCorrH} for various values of $A$.
These results have been obtained using $\boldsymbol{\eta} = \left[0.25\exp(\imath\pi/4)\, 0.5\exp(\imath\pi/6)\, \exp(\imath\pi/8)\right]^T$
and covariance matrix given by \cite{J:YYilmaz}:
\begin{equation}
\mathbf{C} =
\left( \begin{array}{ccc}
1 & 0.5\, e^{\frac{\imath\pi} 2} & 0.25\, e^{\frac{\imath\pi} 4}  \\
0.5\, e^{-\frac{\imath\pi} 2} & 2 & 0.125\, e^{\frac{\imath\pi} 8} \\
0.25\, e^{-\frac{\imath\pi} 4} & 0.125\, e^{-\frac{\imath\pi} 8} &  3
\end{array} \right) .
\end{equation}
As it can be observed, EC increases as $\mu$ increases or $A$ decreases.
In the same figure, equivalent Monte-Carlo computer simulated performance evaluation results have been obtained which perfectly match the analytical ones.
Moreover, it is evident that the high-SNR approximation provides exact results, even at medium SNR values, and can thus accurately predict the respective EC.
\begin{figure}[!t]
\centering
\includegraphics[keepaspectratio,width=3.5in]{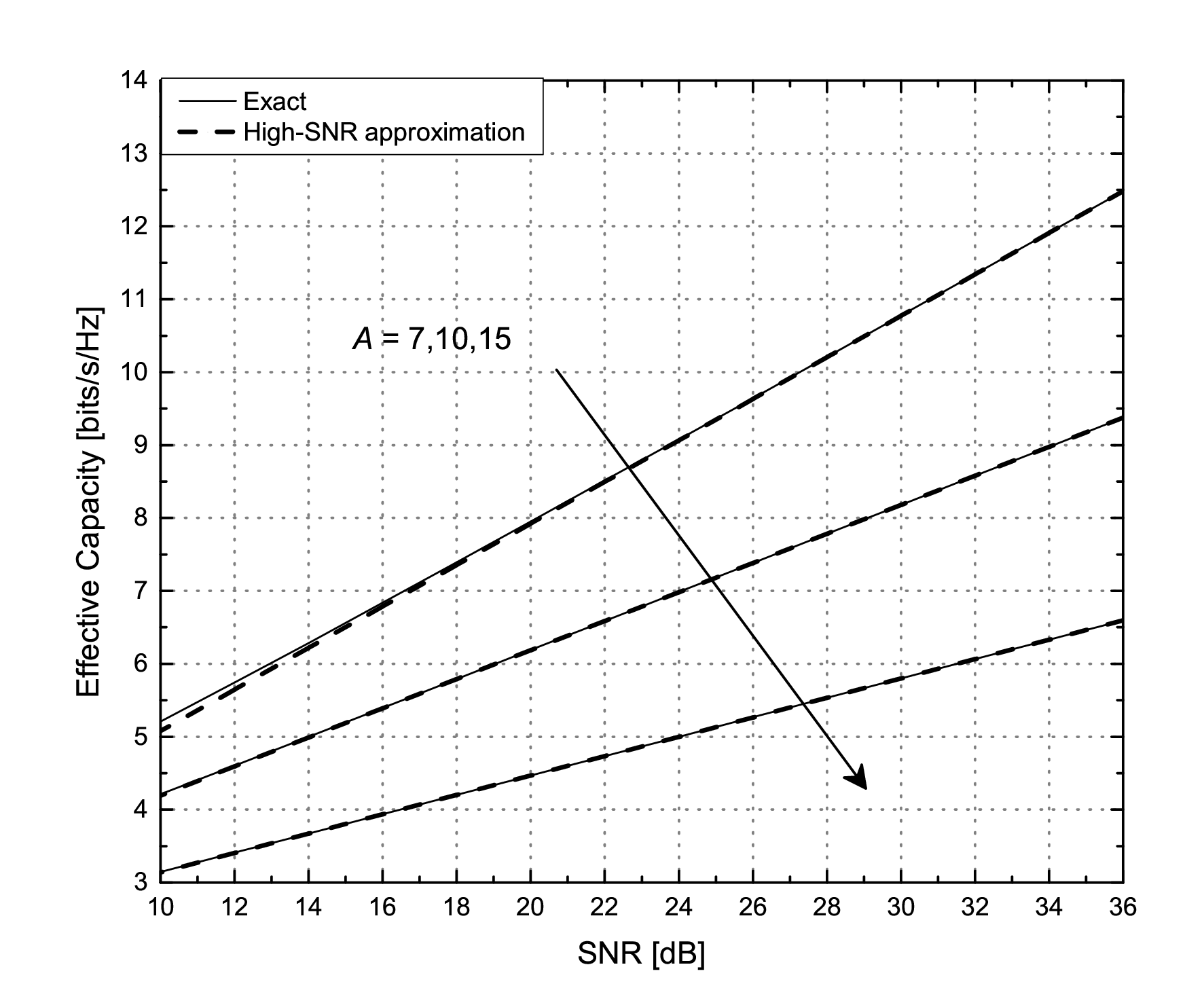}
\caption{Exact analytical EC performance evaluation results and their corresponding and high-SNR approximation for MRC diversity receivers with $L = 3$ branches in correlated generalized Rice fading ($\mu=2$)}\label{Fig:RateGCorrH}
\end{figure}

\subsection{
MRC or EGC over Uncorrelated Generalized Gamma Fading Channels }
\parskip = 0pt
In this case, the PDF or $\mathcal{R}_\ell$ is given as \cite{J:Stacy}
\begin{align}\label{Eq:PDFGG}
f_{\mathcal{R}_\ell}(r) = \frac{\beta \left({b}/{\Omega}\right)^{{m\beta}/{2}}r^{\beta m-1}\exp\left[{-\left(\frac{b}{\Omega}\right)^{{\beta}/{2}}r^{\beta}}\right] }{\Gamma(m)},
\end{align}
where $\beta >0$ and $m>1/2$ are two parameters related to the average fading severity and $\Omega$ is related to the average fading power as  $\mathbb{E}\langle \mathcal{R}{_\ell}{^2}\rangle=\Big({\Omega}/{m}\Big)^{2/\beta}{\Gamma(m+2/\beta)}/{\Gamma(m)}$.

\subsubsection{Exact Analysis}
\parskip = 0pt
For rational values of $\beta$, the MGF of $\mathcal{R}_\ell^p$ can be obtained in terms of Meijer's G-function by using a similar line of arguments as in \cite{J:Sagiasgg}. For arbitrary values of $\beta$, the MGF can be expressed in terms of the Fox's H-function by employing the approach presented in \cite{C:Yilmaz4}.
In what follows, a simple, computational efficient expression for the MGF of $\mathcal{R}_\ell^p$, valid for arbitrary values of $\beta$ will be derived.
By employing the definition of the MGF and performing the change of variables $\left({b}/{\Omega}\right)^{{\beta}/{2}}r^{\beta} = t^2$,
$\mathcal{M}_{{\mathcal{R}_\ell^p}}(u)$ can be deduced as
\begin{equation}\label{Eq:MGFGGaux}
\mathcal{M}_{{\mathcal{R}_\ell^p}}(u) = \frac{2}{\Gamma(m)}\int_0^{\infty} {t}^{2m-1}\exp\left[-t^2-u {t}^{\frac{2p}{\beta}\big({\Omega}/{b}\big)^{{p}/{2}}}\right]\mathrm{d}t
\end{equation}
The integral in \eqref{Eq:MGFGGaux} can be solved by employing a Gauss-Chebyshev quadrature technique as
\begin{equation}\label{Eq:MGFGG}
\mathcal{M}_{{\mathcal{R}_\ell^p}}(u) = \frac{2}{\Gamma(m)}\sum\limits_{k=1}^{N_t} w_k {t_k}^{2m-1}\exp\left[{-u {t_k}^{\frac{2p}{\beta}\left({\Omega}/{b}\right)^{{p}/{2}}}}\right],
\end{equation}
where $N_t$ is the number of integration points, $w_k$ and $t_k$ are the weights and abscissae given in \cite{J:Steen}.
%

\subsubsection{Asymptotic Analysis}
\parskip = 0pt
Hereafter we consider the EGC case only, i.e. $q = 2$.
Then by using \eqref{Eq:EGChigh}, the high-SNR asymptotic EC performance under GG fading with EGC diversity can be easily deduced.
Specifically, by employing a Taylor series expansion of \eqref{Eq:PDFGG} at $r = 0$ one obtains
\begin{equation}
f_{\mathcal{R}_\ell}(r) \approx \frac{\beta \left({b}/{\Omega}\right)^{{m\beta}/{2}}r^{\beta m-1}}{\Gamma(m)}.
\end{equation}
By invoking the definitions of the MGF and the gamma function, the MGF of $\mathcal{R}_\ell$ when $u\rightarrow \infty$ can be approximated as
\begin{align}\label{Eq:MGFGGH}
\mathcal{M}_{\mathcal{R}_\ell}(u) \approx \frac{\beta \big({b}/{\Omega}\big)^{{m\beta}/{2}}\Gamma(\beta m)}{\Gamma(m)}u^{-\beta m}.
\end{align}
In Fig.~\ref{Fig:RateGGEGCH}, the exact analytical EC performance and high-SNR approximation of EGC diversity receivers with $L$ = 3 branches over i.i.d. GG fading channels are depicted for various values of $A$. All results have been obtained by assuming $N_t$ = 15. It is evident that the high-SNR approximation provides very tight results for high SNR values.

\begin{figure}[!t]
\centering
\includegraphics[keepaspectratio,width=3.5in]{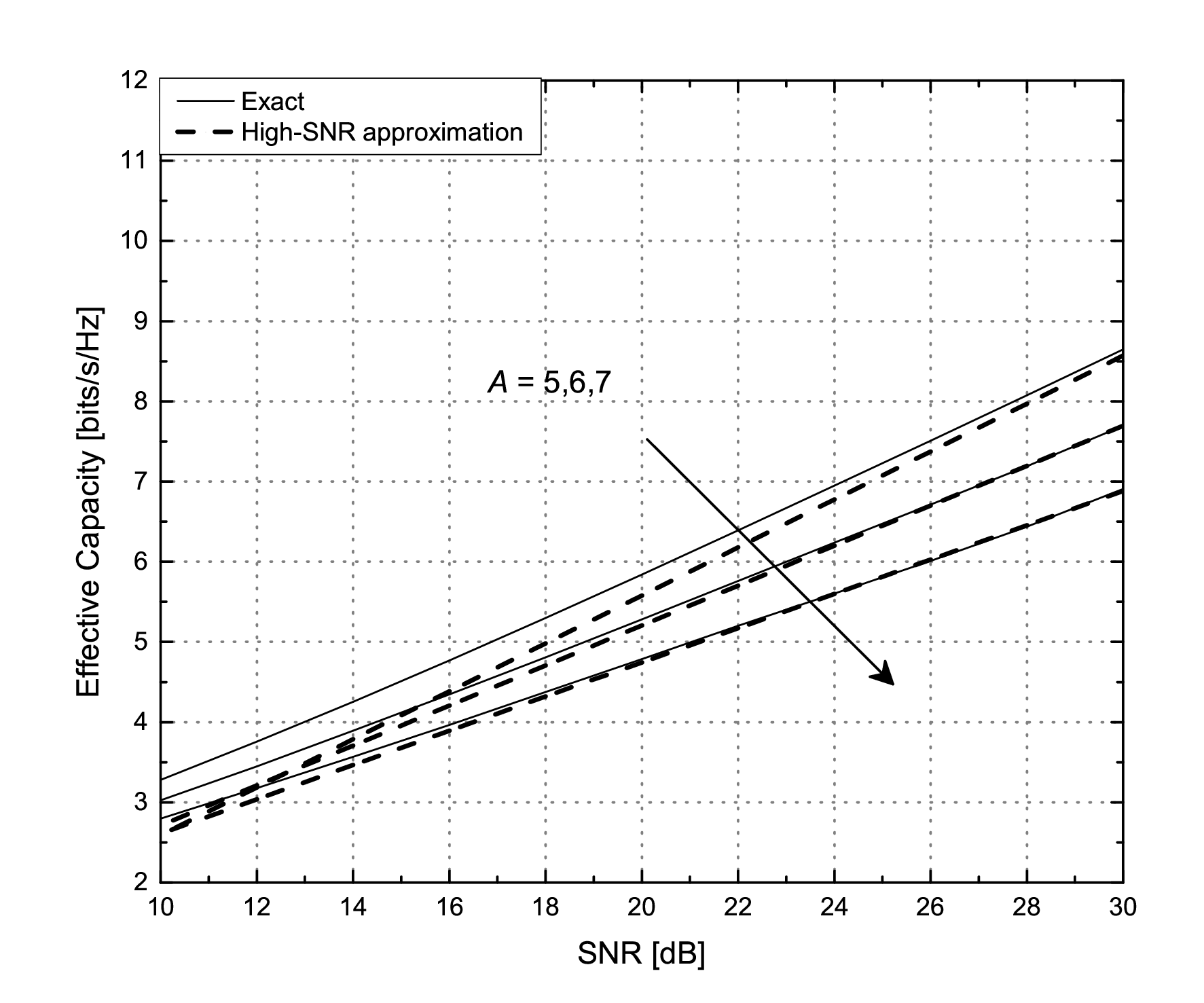}
\caption{Exact analytical EC performance evaluation results and their high-SNR approximation for EGC diversity receivers with $L = 3$ branches in i.i.d. GG fading ($m=2, \beta=1.5$) }\label{Fig:RateGGEGCH}
\end{figure}

It is noted that the low-SNR asymptotic EC performance under GG fading with EGC diversity can be readily deduced by combining \eqref{Eq:RLow} and \eqref{Eq:dots}\textendash\eqref{Eq:momsMGF}.
In Fig.~\ref{Fig:RateGGEGCL}, the exact analytical EC and low-SNR approximation of EGC receivers with $L = 3$ branches and operating over i.i.d. GG fading channels are depicted for various values of $A$. It is evident from these results that the low-SNR approximation is very accurate.
\begin{figure}[!t]
\centering
\includegraphics[keepaspectratio,width=3.5in]{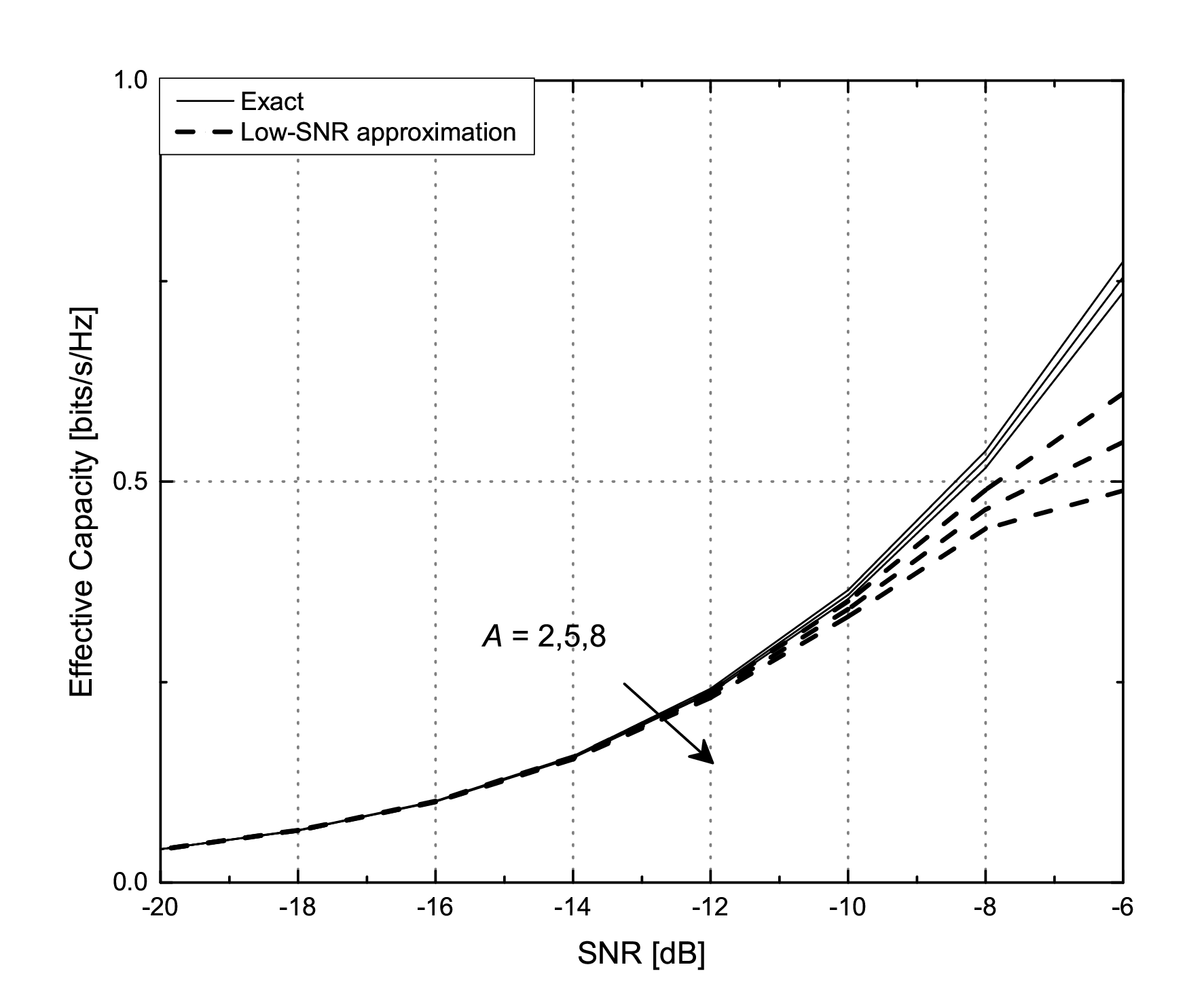}
\caption{Exact analytical EC performance evaluation results and their low-SNR approximation for EGC receivers with $L = 3$ branches in i.i.d. GG fading ($m=3, \beta=3$) }\label{Fig:RateGGEGCL}
\end{figure}


\subsection{MRC or EGC over Uncorrelated $\alpha-\kappa-\mu$ Fading Channels}
\parskip = 0pt
The $\alpha-\kappa-\mu$ is a very general fading model that includes as special cases several well-known distributions, namely the generalized gamma, the $\kappa-\mu$, the Nakagami-$m$ and the Rice distribution.
In this case, the PDF of $\mathcal{R}_\ell$ is given by \cite{C:Fraidenraich}
\begin{align}\label{Eq:PDFakm}
\begin{split}
f_{\mathcal{R}_\ell}(r)&=\frac{\alpha\kappa^{\frac{1-\mu}{2}}(1+\kappa)^{\frac{1+\mu}{2}}\mu r^{\frac{\alpha(1+\mu)}{2}-1}}{\exp\left[{\mu(\kappa+r^{\alpha}+\kappa r^{\alpha})}\right]}\\&\times I_{\mu-1}\left(2\sqrt{\kappa(1+\kappa)\mu r^{\alpha/2}}\right),
\end{split}
\end{align}
where $\alpha$, $\mu$ and $\kappa\geq0$ are the distribution parameters.
\subsubsection{Exact Analysis}
\parskip = 0pt
A novel computationally efficient expression for $\mathcal{M}_{{\mathcal{R}_\ell^p}}(u)$ can be obtained by following a similar line of arguments as in the generalized gamma case, as
\begin{align}\label{Eq:MGFakm}
\begin{split}
& \mathcal{M}_{{\mathcal{R}_\ell^p}}(u) = 2\exp[{-\mu\kappa}](\kappa\mu)^{(1-\mu)/{2}}\\&\times\sum \limits_{k=1}^{N_t}w_k{t_k}^{\mu}I_{\mu-1}(2\sqrt{\kappa\mu}t_k)\exp\left[-u\left(\frac{t_k^2}{\mu(1+\kappa)}\right)^{p/a}\right].
\end{split}
\end{align}
where the number of integration points $N_t$, the weights $w_k$ and abscissae $t_k$ are given in \cite{J:Steen}.
\begin{figure}[!t]
\centering
\includegraphics[keepaspectratio,width=3.5in]{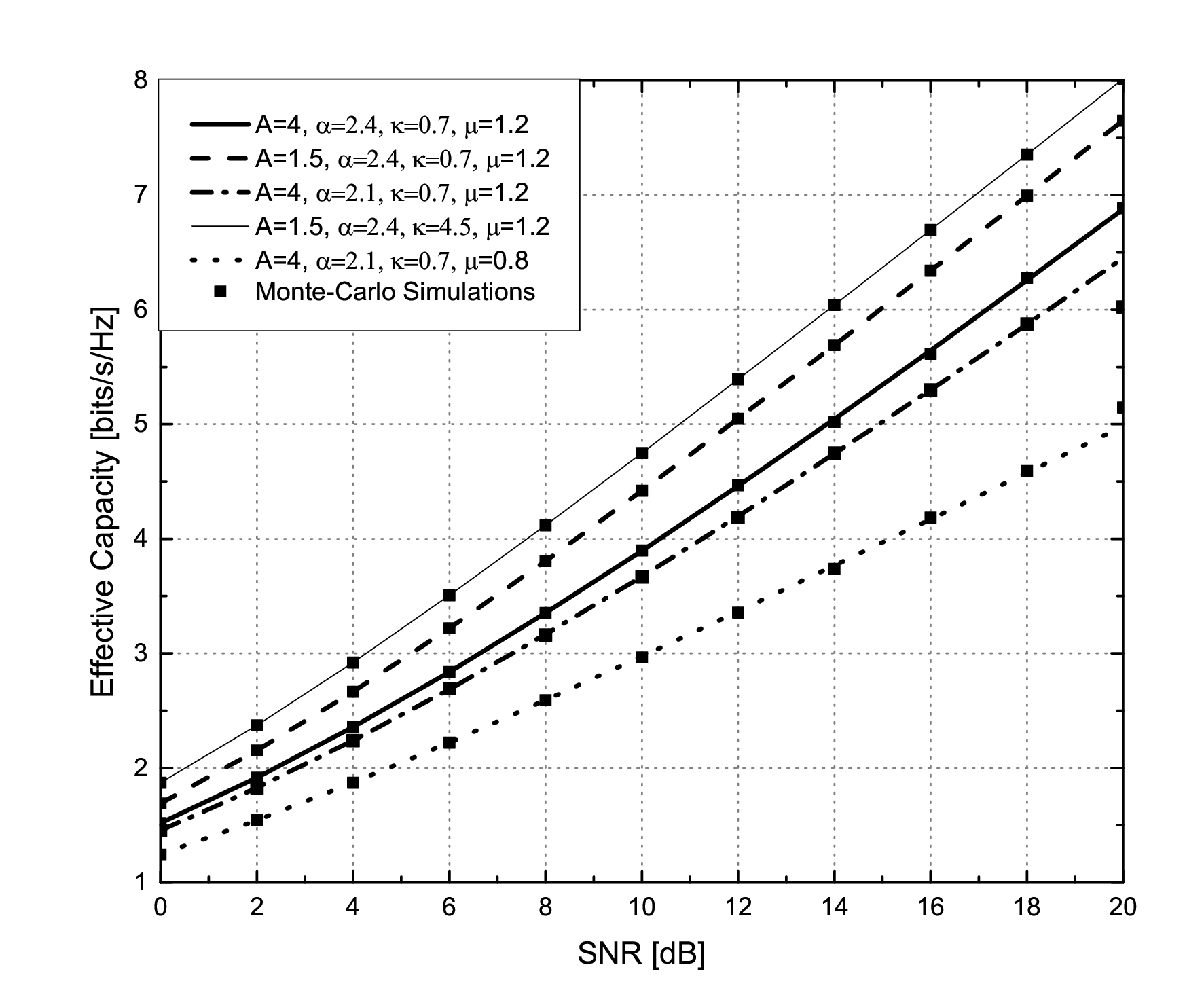}
\caption{Exact analytical EC performance evaluation results for EGC receivers with $L = 3$ branches in i.i.d. $\alpha-\kappa-\mu$ fading}\label{Fig:RateakmEGC}
\end{figure}

Fig.~\ref{Fig:RateakmEGC} depicts the EC performance of an EGC receiver with  $L = 3$ diversity branches, under i.i.d. $\alpha-\kappa-\mu$ fading.
As it can be observed, analytical and simulated results are again in perfect agreement. Moreover, as expected, EC performance improves as $A$ decreases and $\alpha$, $\kappa$ or $\mu$ increases.
\subsubsection{Asymptotic Analysis}
\parskip = 0pt
Hereafter, the EGC case is considered, i.e. $q = 2$.
By following a similar line of arguments as in the generalized gamma case, $\mathcal{M}_{{\mathcal{R}_\ell}}(u)$ when $u\rightarrow \infty$ can be approximated as
\begin{align}\label{Eq:MGFakmH}
\mathcal{M}_{{\mathcal{R}_\ell}}(u) \approx \alpha(1+\kappa)^{\mu}\mu^{\mu}\exp({-\mu\kappa})\Gamma(\alpha\mu) u^{-\alpha\mu}.
\end{align}
Therefore using \eqref{Eq:MRChigh}\textendash\eqref{Eq:EGChigh} asymptotically tight high-SNR expressions for the EC performance of the considered system can be readily deduced.
\subsection{
MRC or EGC over Uncorrelated $\alpha-\eta-\mu$ Fading Channels}
\parskip = 0pt
The $\alpha-\eta-\mu$ distribution is a very general fading distribution that includes as special cases the generalized gamma, the $\eta-\mu$, the Hoyt and the Nakagami-$m$ distributions \cite{C:Fraidenraich}.
In this case, the PDF of $\mathcal{R}_\ell$ is given by \cite{C:Fraidenraich}
\begin{align}\label{Eq:PDFahm}
\begin{split}
f_{{\mathcal{R}_\ell}}(r)&=\frac{\alpha(\eta-1)^{\frac{1}{2}-\mu}(\eta+1)^{\frac{1}{2}+\mu}r^{\alpha(\frac{1}{2}+\mu)-1}}{\exp\left[-{(1+\eta)^2\mu r^{\alpha}}/({2\eta})\right]\sqrt{\eta}\Gamma(\mu)}\\&\times\sqrt{\pi}\mu^{\frac{1}{2}+\mu}I_{\mu-\frac{1}{2}}\Big(\frac{(\eta^2-1)\mu r^{\alpha}}{2\eta}\Big),
\end{split}
\end{align}
where $\eta\geq0$ .
\subsubsection{Exact Analysis}
\parskip = 0pt
Following a similar methodology as in the previous test cases, a novel expression for the MGF of $\mathcal{R}_\ell^p$ is deduced as
\begin{align}\label{Eq:MGFahm}
\begin{split}
&\mathcal{M}_{{\mathcal{R}_\ell^p}}(u) = \frac{2\alpha\sqrt{\pi}(\eta-1)^{\frac{1}{2}-\mu}(\eta+1)^{\frac{1}{2}+\mu}\mu^{\frac{1}{2}+\mu}}{p\sqrt{\eta}\Gamma({\mu})}\\&\times\sum \limits_{k=1}^{N_t}w_k{t_k}^{{\alpha}(1+2\mu)/{p}-1}\exp\left[-\frac{\mu(1+\eta)^2t_k^{{2\alpha}/{p}}}{u^{{\alpha}/{p}}\eta}\right]\\&\times {I_{\mu-\frac{1}{2}}\left[{(\eta^2-1)\mu t_k^{{2\alpha}/{p}}}/({2u^{{\alpha}/{p}}\eta})\right]}{u^{-{\alpha(1+2\mu)}/{(2p)}}}.
\end{split}
\end{align}
where $N_t$ is the number of integration points, $w_k$ and $t_k$ are the weights and abscissae given in \cite{J:Steen}.
\subsubsection{Asymptotic Analysis}
\parskip = 0pt
Following a similar reasoning as in the previous cases and assuming EGC diversity reception, $\mathcal{M}_{{\mathcal{R}_\ell}}(u)$ can be approximated as
\begin{align}\label{Eq:MGFahmH}
\begin{split}
\mathcal{M}_{{\mathcal{R}_\ell}}(u) &\approx \frac{\alpha(\eta-1)^{\frac{1}{2}-\mu}(1+\eta)^{\frac{1}{2}+\mu}\mu^{\frac{1}{2}+\mu}\sqrt{\pi}\Gamma(2\alpha\mu)}{\sqrt{\eta}\Gamma(\mu)}\\&\times\Big[\frac{(\eta^2-1)\mu}{4\eta}\Big]^{\mu-\frac{1}{2}} u^{-2\alpha\mu}.
\end{split}
\end{align}
In Fig.~\ref{Fig:RateahmEGCH}, the exact EC and high-SNR approximation performance of EGC receivers with $L = 3$ branches and operating over i.i.d. $\alpha-\eta-\mu$ fading channels are depicted for various values of A. It is evident that the high-SNR approximation provides exact results, even at moderate SNR values, and can thus accurately predict the respective effective capacity.

\begin{figure}[!t]
\centering
\includegraphics[keepaspectratio,width=3.5in]{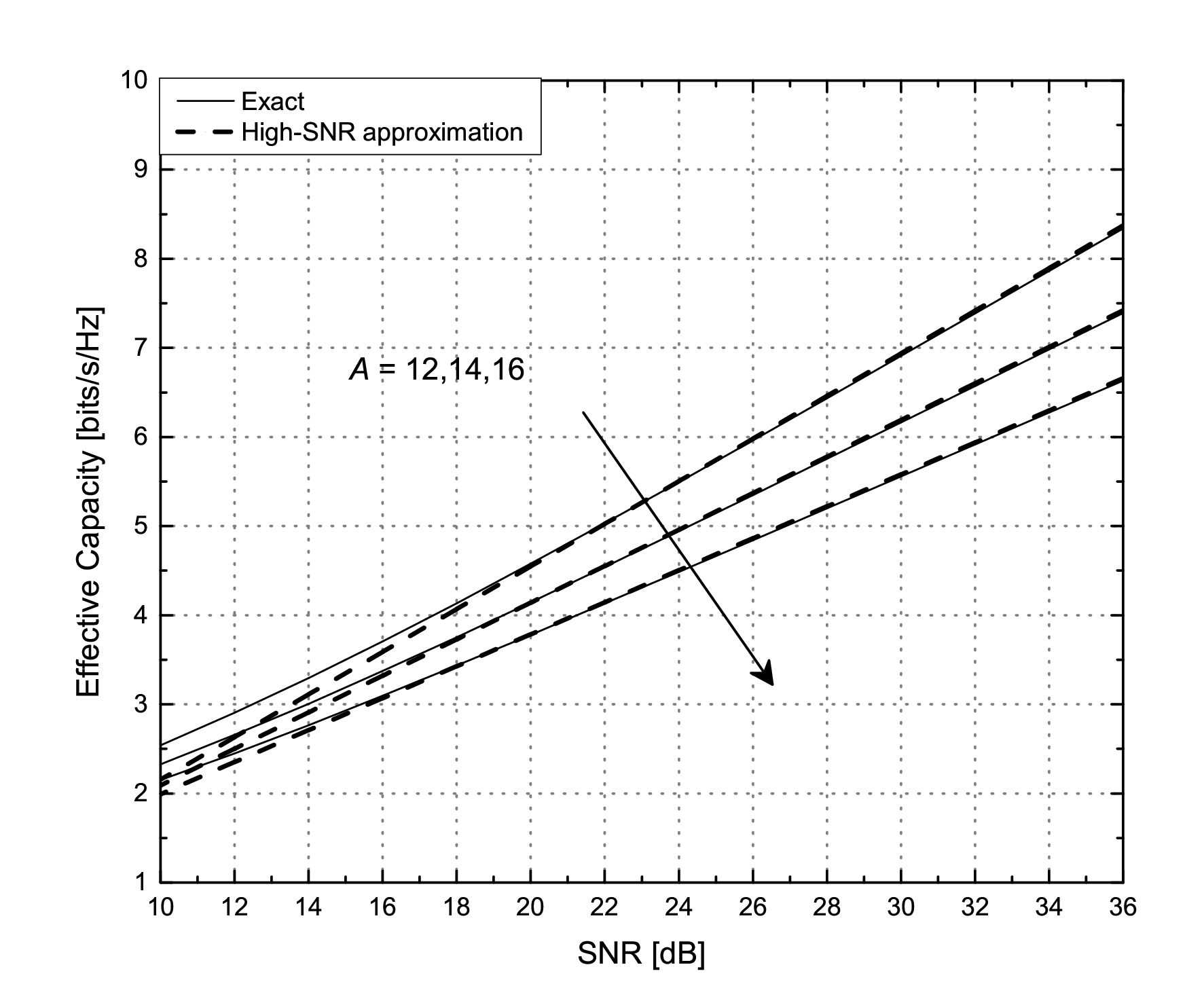}
\caption{Exact analytical EC performance evaluation results and their high-SNR approximation of EGC diversity receivers with $L = 3$ branches in i.i.d $\alpha-\eta-\mu$ fading ($\alpha=2.4, \eta=64.3, \mu=1.2$) }\label{Fig:RateahmEGCH}
\end{figure}

\section{Conclusions}\label{Sec:Conclusions}
Real-time applications are quite delay-sensitive, requiring an alternative performance metric rather than the conventional Shannon or outage capacity. Lately, the EC has attracted attention as a suitable metric quantifying end-to-end system performance under QoS limitations.
In this paper a new MGF-based methodology for obtaining,
in a unified way, the exact EC performance of $L_p$-norm diversity reception over arbitrary and correlated generalized fading channels, was proposed.
For the special case of dual diversity, closed-form expressions for the EC performance over GSNM fading channels have further been deduced.
Finally, an analytical MGF-based approach for the asymptotic analysis of the EC performance at low- and high-SNR regions was also proposed thus providing useful insights regarding the operating parameters which affect the overall system performance.
The validity of the proposed analytical methodology was assessed by considered very generic channel fading models that describe wireless propagation in a more realistic manner than the conventional fading models.
 The accuracy of the proposed analysis was substantiated with numerical results, accompanied with equivalent performance evaluation results obtained by means of Monte-Carlo simulations.


\bibliographystyle{IEEEtran}
\bibliography{IEEEabrv,peppas2}

\begin{thebibliography}{10}
\providecommand{\url}[1]{#1}
\csname url@samestyle\endcsname
\providecommand{\newblock}{\relax}
\providecommand{\bibinfo}[2]{#2}
\providecommand{\BIBentrySTDinterwordspacing}{\spaceskip=0pt\relax}
\providecommand{\BIBentryALTinterwordstretchfactor}{4}
\providecommand{\BIBentryALTinterwordspacing}{\spaceskip=\fontdimen2\font plus
\BIBentryALTinterwordstretchfactor\fontdimen3\font minus
  \fontdimen4\font\relax}
\providecommand{\BIBforeignlanguage}[2]{{%
\expandafter\ifx\csname l@#1\endcsname\relax
\typeout{** WARNING: IEEEtran.bst: No hyphenation pattern has been}%
\typeout{** loaded for the language `#1'. Using the pattern for}%
\typeout{** the default language instead.}%
\else
\language=\csname l@#1\endcsname
\fi
#2}}
\providecommand{\BIBdecl}{\relax}
\BIBdecl

\bibitem{J:Wu}
D.~Wu and R.~Negi, ``Effective capacity: A wireless link model for support of
  quality of service,'' \emph{{IEEE} Trans. Wireless Commun.}, vol.~2, no.~4,
  pp. 630--643, Jul. 2003.

\bibitem{J:Matthaiou}
M.~Matthaiou, G.~C. Alexandropoulos, H.~Ngo, and E.~Larsson, ``Analytic
  framework for the effective rate of {MISO} fading channels,'' \emph{{IEEE}
  Trans. Commun.}, vol.~60, no.~6, pp. 1741--1751, 2012.

\bibitem{J:Zhang}
J.~Zhang, Z.~Tan, H.~Wang, Q.~Huang, and L.~Hanzo, ``The effective throughput
  of {MISO} systems over $\kappa$-$\mu$ fading channels,'' \emph{{IEEE} Trans.
  Veh. Technol.}, vol.~63, no.~2, pp. 943--947, 2014.

\bibitem{C:Zhang}
J.~Zhang, M.~Matthaiou, Z.~Tan, and H.~Wang, ``Effective rate analysis of
  {MISO} $\eta$-$\mu$ fading channels,'' in \emph{{IEEE} Int. Conf. Commun.
  (ICC)}, 2013, pp. 5840--5844.

\bibitem{J:Zhong}
C.~Zhong, T.~Ratnarajah, K.-K. Wong, and M.-S. Alouini, ``Effective capacity of
  multiple antenna channels: Correlation and keyhole,'' \emph{IET Commun.},
  vol.~6, no.~12, pp. 1757--1768, 2012.

\bibitem{J:Guo}
X.-B. Guo, L.~Dong, and H.~Yang, ``Performance analysis for effective rate of
  correlated {MISO} fading channels,'' \emph{Electron. Lett.}, vol.~48, no.~24,
  pp. 1564--1565, 2012.

\bibitem{J:You2}
M.~You, H.~Sun, J.~Jiang, and J.~Zhang, ``Effective rate analysis in {W}eibull
  fading channels,'' \emph{IEEE Wireless Comm. Lett.}, vol.~5, no.~4, pp.
  340--343, 2016.

\bibitem{J:Li}
X.~Li, J.~Li, L.~Li, J.~Jin, J.~Zhang, and D.~Zhang, ``Effective rate of {MISO}
  systems over $\kappa$-$\mu$ shadowed fading channels,'' \emph{IEEE Access},
  vol.~5, pp. 10\,605--10\,611, 2017.

\bibitem{J:Ji1}
Z.~Ji, Y.~Wang, and J.~Lu, ``{MGF}-based effective capacity for generalized
  fading channels,'' \emph{Applied Mechanics and Materials}, pp. 927--931,
  2014.

\bibitem{C:Ji2}
Z.~Ji, C.~Dong, Y.~Wang, and J.~Lu, ``On the analysis of effective capacity
  over generalized fading channels,'' in \emph{Proc. IEEE ICC}, 2014, pp.
  1977--1983.

\bibitem{J:You}
M.~You, H.~Sun, J.~Jiang, and J.~Zhang, ``Unified framework for the effective
  rate analysis of wireless communication systems over {MISO} fading
  channels,'' \emph{{IEEE} Trans. Commun.}, vol.~65, no.~4, pp. 1775--1785,
  Apr. 2017.

\bibitem{B:Alouini}
M.~K. Simon and M.-S. Alouini, \emph{Digital {C}ommunication over {F}ading
  {C}hannels}, 2nd~ed.\hskip 1em plus 0.5em minus 0.4em\relax New York: Wiley,
  2005.

\bibitem{J:YilmazEGC}
F.~Yilmaz and M.-S. Alouini, ``A unified {MGF}-based capacity analysis of
  diversity combiners over generalized fading channels,'' \emph{{IEEE} Trans.
  Commun.}, vol.~60, no.~3, pp. 862--875, 2012.

\bibitem{B:Gradshteyn_00}
I.~Gradshteyn and I.~M. Ryzhik, \emph{Tables of Integrals, Series, and
  Products}, 6th~ed.\hskip 1em plus 0.5em minus 0.4em\relax New York: Academic
  Press, 2000.

\bibitem{B:HFox}
A.~Mathai, R.~K. Saxena, and H.~J. Haubold, \emph{The {H}-Function: Theory and
  Applications}.\hskip 1em plus 0.5em minus 0.4em\relax Springer, 2010.

\bibitem{B:Rudin}
W.~Rudin, \emph{Real and Complex Analysis}, 2nd~ed.\hskip 1em plus 0.5em minus
  0.4em\relax New Delhi: Tata McGraw-Hill, 1980.

\bibitem{J:Nasri}
A.~Nasri, A.~Nezampour, and R.~Schober, ``Adaptive {L}$\null_p$-norm diversity
  combining in non-{G}aussian noise and interference,'' \emph{{IEEE} Trans.
  Wireless Commun.}, vol.~8, no.~8, pp. 4230--4240, Aug. 2009.

\bibitem{B:Prudnikov3}
A.~P. Prudnikov, Y.~A. Brychkov, and O.~I. Marichev, \emph{Integrals and
  {S}eries {V}olume 3: {M}ore {S}pecial {F}unctions}, 1st~ed.\hskip 1em plus
  0.5em minus 0.4em\relax Gordon and Breach Science Publishers, 1986.

\bibitem{B:Prudnikov5}
------, \emph{Integrals and {S}eries {V}olume 5: {I}nverse {L}aplace
  {T}ransforms}, 1st~ed.\hskip 1em plus 0.5em minus 0.4em\relax CRC, 1992.

\bibitem{J:PeppasEC}
K.~P. Peppas, P.~T. Mathiopoulos, and J.~Yang, ``On the effective capacity of
  amplify-and-forward multihop transmission over arbitrary and correlated
  fading channels,'' \emph{IEEE Wireless Comm. Lett.}, vol.~5, no.~3, pp.
  248--251, Jun. 2016.

\bibitem{J:CreeHankel}
M.~J. Cree and P.~J. Bones, ``Algorithms to numerically evaluate the {H}ankel
  transform,'' \emph{Computers Math. Applic.}, vol.~26, no.~1, pp. 1--12, 1993.

\bibitem{J:Shanks}
D.~Shanks, ``Non-linear transformations of divergent and slowly convergent
  sequences,'' \emph{J. Math. Phys.}, vol.~34, pp. 1--42, 1955.

\bibitem{J:Peppas2012WCL}
K.~P. Peppas, ``A new formula for the average bit error probability of dual-hop
  amplify-and-forward relaying systems over generalized shadowed fading
  channels,'' \emph{IEEE Wireless Commun. Lett.}, vol.~1, no.~2, pp. 85--88,
  Apr. 2012.

\bibitem{J:Ansari}
I.~S. Ansari, S.~Al-Ahmadi, F.~Yilmaz, M.-S. Alouini, and H.Yanikomeroglu, ``A
  new formula for the {BER} of binary modulations with dual-branch selection
  over gereralized-{K} composite fading channels,'' \emph{{IEEE} Trans.
  Commun.}, vol.~59, no.~10, pp. 2654--2658, Oct. 2011.

\bibitem{J:ShamaiVerdu}
S.~Shamai and S.~Verdú, ``The impact of frequency-flat fading on the spectral
  efficiency of {CDMA},'' \emph{{IEEE} Trans. Inf. Theory}, vol.~47, no.~4, pp.
  1302--1327, May 2001.

\bibitem{J:WangYannakis}
Z.~Wang and G.~Giannakis, ``A simple and general parametrization quantifying
  performance in fading channels,'' \emph{{IEEE} Trans. Commun.}, vol.~51,
  no.~8, pp. 1389--1398, Aug. 2003.

\bibitem{J:Verdu}
S.~Verdu, ``Spectral efficiency in the wideband regime,'' \emph{{IEEE} Trans.
  Inf. Theory}, vol.~48, no.~6, pp. 1319--1343, Jun. 2002.

\bibitem{B:Papoulis_02}
A.~Papoulis, \emph{Probability, Random Variables, and Stochastic Processes},
  4th~ed.\hskip 1em plus 0.5em minus 0.4em\relax New York: Mc Graw Hill, 2002.

\bibitem{J:YYilmaz}
F.~Yilmaz and M.-S. Alouini, ``On the computation of the higher-order
  statistics of the channel capacity over generalized fading channels,''
  \emph{IEEE Wireless Comm. Lett.}, vol.~1, no.~1, pp. 573--576, 2012.

\bibitem{J:Hamdi}
K.~Hamdi, ``Capacity of {MRC} on correlated {R}ician fading channels,''
  \emph{Int'l J. Electron. and Commun.}, vol.~56, no.~5, pp. 708--7011, 2008.

\bibitem{J:Stacy}
E.~Stacy, ``A generalization of the gamma distribution,'' \emph{The Annals of
  Mathematical Statistics}, vol.~3, no.~33, pp. 1187--1192, Sep. 1962.

\bibitem{J:Sagiasgg}
N.~C. Sagias and P.~T. Mathiopoulos, ``Switched diversity receivers over
  generalized gamma fading channels,'' \emph{{IEEE} Commun. Lett.}, vol.~9,
  no.~10, pp. 871--873, 2005.

\bibitem{C:Yilmaz4}
F.~Yilmaz and M.-S. Alouini, ``Product of the powers of generalized
  {N}akagami-$m$ variates and performance of cascaded fading channels,'' in
  \emph{IEEE Global Telecommunications Conf.}, 2009, pp. 1 -- 8.

\bibitem{J:Steen}
N.~M. Steen, G.~D. Byrne, and E.~M. Gelbard, ``Gaussian quadratures for the
  integrals $\int_0^{\infty}e^{-x^2}f(x){\mathrm d}x$ and
  $\int_0^{b}e^{-x^2}f(x){\mathrm d}x$,'' \emph{Mathematics of Computation},
  vol.~23, no. 107, pp. 661--671, 1969.

\bibitem{C:Fraidenraich}
G.~Fraidenraich and M.~D. Yacoub, ``The $\alpha$-$\eta$-$\mu$ and
  $\alpha$-$\kappa$-$\mu$ fading distributions,'' in \emph{2006 IEEE Ninth
  International Symposium on Spread Spectrum Techniques and Applications}, Aug
  2006, pp. 16--20.

\end{thebibliography}
%

\end{document}